\renewcommand{\vec}[1]{\mathbf{#1}}
\newcommand{\eps}{\epsilon}
 \newcommand{\bij}{b_{i,j}} 
\newcommand{\xij}{x_{i,j}}
\newcommand{\aij}{a_{i,j}} 
\newcommand{\fisher}{\operatorname{\mathsf{Fisher}}}
\definecolor{mygreen}{rgb}{0.0, 0.55, 0.0}
\definecolor{amethyst}{rgb}{0.6, 0.4, 0.8}
\definecolor{blue(pigment)}{rgb}{0.2, 0.2, 0.6}
\definecolor{blue-violet}{rgb}{0.54, 0.17, 0.89}
\definecolor{blush}{rgb}{0.87, 0.36, 0.51}
\newtheorem{remark}{Remark}
\newtheorem{lemma}{Lemma}
\newtheorem{definition}{Definition}
\newtheorem{example}{Example}
\newtheorem{theorem}{Theorem}
\newtheorem{corollary}{Corollary}
\newtheoremstyle{case}{}{}{}{}{}{:}{ }{}
\theoremstyle{case}
  \providecommand\BibTeX{{%
    \normalfont B\kern-0.5em{\scshape i\kern-0.25em b}\kern-0.8em\TeX}}}
\begin{document}

\title{Proportional Dynamics in Exchange Economies}

\author{Simina Br\^anzei}
\affiliation{%
  \institution{Purdue University}
  \city{West Lafayette}
  \country{United States}}
\email{simina@purdue.edu}

\author{Nikhil Devanur}
\affiliation{%
  \institution{Amazon}
  \city{Seattle}
  \country{United States}
}

\author{Yuval Rabani}
\affiliation{%
 \institution{Hebrew University of Jerusalem}
 \city{Jerusalem}
 \country{Israel}}

\renewcommand{\shortauthors}{Br\^anzei, Devanur, and Rabani}

\begin{abstract}
	We study the proportional dynamics in exchange economies, where each player starts with some amount of money and a good. Every day, players bring one unit of their good and submit bids on goods they like, each good gets allocated in proportion to the bid amounts, and each seller collects the bids received. Then every player updates their bids proportionally to the contribution of each good in their utility.
\medskip

This dynamic models a process of learning how to bid and has been studied in a series of papers on Fisher and production markets, but not in exchange economies. 
Our main results are as follows:
\begin{enumerate}
	\item For all linear utilities, the dynamic converges to market equilibrium utilities and allocations, while the bids and prices may cycle. We give a combinatorial characterization of limit cycles for prices and bids. 
	\item We introduce a lazy version of the dynamic, where players may save money for later, and show this converges in everything: utilities, allocations, and prices.
\end{enumerate}

\medskip

This answers an open  question about exchange markets with linear utilities, where t\^atonnement does not converge to market equilibria, and no natural process leading to equilibria was known for all additive utilities. We also note this dynamics represents a process where the players exchange goods throughout time (in out-of-equilibrium states), while t\^atonnement only explains how exchange happens in the limit.
\end{abstract}

  \begin{CCSXML}
<ccs2012>
<concept>
<concept_id>10003752.10010070.10010099</concept_id>
<concept_desc>Theory of computation~Algorithmic game theory and mechanism design</concept_desc>
<concept_significance>500</concept_significance>
</concept>
<concept>
<concept_id>10003752.10010070.10010099.10010105</concept_id>
<concept_desc>Theory of computation~Convergence and learning in games</concept_desc>
<concept_significance>500</concept_significance>
</concept>
<concept>
<concept_id>10003752.10010070.10010099.10010106</concept_id>
<concept_desc>Theory of computation~Market equilibria</concept_desc>
<concept_significance>500</concept_significance>
</concept>
<concept>
<concept_id>10003752.10010070.10010099.10010109</concept_id>
<concept_desc>Theory of computation~Network games</concept_desc>
<concept_significance>500</concept_significance>
</concept>
</ccs2012>
\end{CCSXML}

\ccsdesc[500]{Theory of computation~Algorithmic game theory and mechanism design}
\ccsdesc[500]{Theory of computation~Convergence and learning in games}
\ccsdesc[500]{Theory of computation~Market equilibria}
\ccsdesc[500]{Theory of computation~Network games}

\keywords{exchange markets, proportional response, dynamical systems, learning to bid}


\maketitle

\section{Introduction}

Market \emph{dynamics} have been an integral part of general equilibrium theory since its inception. 
The introduction of general equilibrium theory by Walras~\cite{walras1896elements} was accompanied 
by the idea of the \emph{t\^atonnement} process. Fisher designed in 1891 a device to compute an equilibrium
(Brainard and Scarf~\cite{BS05}). The most popular interpretation of t\^atonnement, however, is as \emph{fictitious} play. 
An auctioneer, playing the role of \emph{the invisible hand}, calls out prices to which the agents respond 
with their demand, then the auctioneer adjusts the prices, and the process repeats until the excess demand 
is zero. At this point exchange actually happens, at equilibrium prices. This, however, is not necessarily how 
actual markets function in practice. Moreover, general equilibrium theory itself suffers from a lack of 
a \emph{descriptive model} of out of equilibrium exchange: what happens when the excess demand is positive? 
Further, the causal linkage between demand and prices is unspecified. Demand is a response to the price 
as well as the price is a response to the demand. (See~\cite{Fisher83} for some disequilibrium extensions;
there is no widely agreed upon model.) The question is fundamentally both algorithmic and economic.
Algorithmically, the question is about an effective and efficient locally controlled network process that
computes an equilibrium. Economically, the question is about an incentives-motivated multi-agent process
that converges to equilibrium.

\medskip

Shapley and Shubik~\cite{shapley1977trade} sought to address these issues via the \emph{trading post} mechanism. 
This is first of all a descriptive model that specifies concrete outcomes as a result of player strategies, and 
therefore it can be viewed as a non-cooperative game. Prices are a result of strategic actions; the higher the 
demand for a good, the higher its price, and vice versa. It requires that all trade be \emph{monetary} and that 
the players pay \emph{cash in advance}.\footnote{The assumption is that there is one special commodity 
	used as the means of payment, which is called \emph{cash} or \emph{money}. It may or may not have an intrinsic utility of its 
	own.} Each player submits a cash bid on each \emph{good}. The goods are then distributed \emph{in proportion 
	to the bids}, and the per-unit price of a good is set to be the sum of bids on that good (this implies that the bids 
of a player should add up to at most the cash at hand). The same mechanism has been rediscovered multiple 
times; in particular it has been proposed for sharing resources in computer networks (e.g. Kelly~\cite{kelly1997charging})
and computer systems (e.g. Feldman, Lai, and Zhang~\cite{feldman2005price}).

\medskip

This still leaves open the question of dynamics: (how) do players reach a market equilibrium in the trading post 
mechanism? The predominant answer to this in the last decade or so has been the {proportional response} 
dynamic (see Zhang~\cite{Zhang11}), where buyers iteratively update their bid on each good \emph{in proportion to the utility} they received 
from that good in the previous iteration. For the case of linear utilities, this implies that the ratio of bids in successive 
iterations is proportional to the \emph{bang-per-buck} for that good, in contrast to the \emph{best response} which 
distributes \emph{all} the cash among the goods with the \emph{highest} bang-per-buck. Such an update is similar 
in spirit to the multiplicative weights update algorithms used in online learning (also to proportional t\^atonnement), 
except that there are no parameters such as the \emph{step size} to tune carefully! It still magically seems to work. Wu and Zhang~\cite{wu2007proportional} studied a dynamic without money, in a special type of exchange market, where each good has a common value (i.e. the value of any good $j$ is the same for every player $i$: $v_{i,j} = v_j$), and showed it converges to market equilibria.

\subsection{Our results}
This brings us to the topic of this paper, the study of proportional response dynamics in exchange markets. The
players are both buyers and sellers, and the exchange at each step is fueled by the players revenue from the previous
round, and a fresh batch of goods of fixed quantities at the hands of the players. We focus primarily on \emph{linear utilities}. 
Linear utilities are a widely studied type of utility that are also of interest because the process of t\^atonnement 
is not well defined in markets with such utilities. T\^atonnement adjusts the prices based on the excess demand for a good, 
but with linear utilities the demand is a set function. Also, the demand is discontinuous: a small change in price can lead to 
a large change in demand. This makes t\^atonnement especially unsuited as a process describing market dynamics for linear 
utilities. In contrast, linear utilities pose no such problem for proportional response dynamics. For complements, especially 
in the extreme case of Leontief utilities, there is scant hope for fast convergence since computationally the problem of finding 
an equilibrium is PPAD-hard (see Codenotti, Saberi, Varadarajan, and Ye~\cite{codenotti2006leontief}). However this does not rule out the existence of a slowly converging 
process.

\medskip

The dearth of convergence results in this setting is not for the lack of trying (see, e.g., gradient descent based algorithms~\cite{CLL19}). The difficulty might be attributed to 
the fact that the dynamics can cycle! Consider the scenario where there are two sets of players such that each set buys all 
its goods from the other set. Suppose that the sets start with widely unequal amounts of cash. Then in each iteration, the 
total amount of cash of each set moves to the other side, forever. However there is still hope: the allocations and the utilities 
of the players could converge to those of a market equilibrium, even if the prices oscillate. (In fact, in the above example
the relative prices in each set may converge, even though the price scales alternate between the sides.) 

\medskip

We study the exchange economy model, where each player comes to the market with a unit endowment of an 
exclusive good. For linear utilities, this model is without loss of generality for the equilibrium computation problem, as there is a reduction from the general model, where each player can bring multiple goods, to the one where each player brings only one good. We 
The equilibrium utilities are unique, while the equilibrium allocation may not be. 
Our main results are as follows: 
\begin{enumerate}
	\item We show that the kind of cycling described above is essentially the only one possible. We characterize the 
	\emph{limit cycles} of the dynamics as follows: there are equivalence classes of players such that 
	within each class the ratio of price to equilibrium price is a constant. Further, the classes form a cycle, where 
	the players in each class only buy goods from the players in the next class in the cycle. The allocations and utilities 
	correspond to a market equilibrium, and remain invariant all along the limit cycle. (Theorem~\ref{thm: limit alloc_cycling_bids})
	
	\item We show that the allocation and hence the utilities converge (Theorem~\ref{thm:nonlazy_allocation}). The result in the previous item only shows that 
	the limit set in the allocation space is the set of equilibrium allocations. Convergence to this set is implied.
	\item We introduce a \emph{lazy} version~\footnote{The name is motivated by lazy random walks.} of the proportional response dynamics,
	where players saves a certain fraction of their cash at hand for future rounds (Definition~\ref{def:lazypr}). The fraction can be different for each 
	player, as long as it is in $(0,1)$, but it doesn't change over time. We show that for this version, there is no cycling: 
	prices, allocation and utilities all converge to a market equilibrium (Theorem~\ref{thm:lazy_convergence}). 
	\item We also prove an ergodic rate of convergence of $O(1/T)$ for the utilities, with respect to the Eisenberg-Gale objective function, which captures the market equilibrium allocations and utilities.\footnote{This objective is defined for Fisher markets with fixed budgets. We use this objective with the budgets set to equilibrium incomes.} We note that our simulations indicate that the last iterate also converges, and that the convergence rate is much faster than proved.

\end{enumerate}

\subsection{Previous work}
Most of the results for the convergence of the proportional response dynamics to date have been for \emph{Fisher} 
markets, where the players act as buyers, and each step is fueled by a fixed income of each player and a fresh
batch of goods. Zhang~\cite{Zhang11} shows convergence of proportional response dynamics to the market 
equilibrium for \emph{Constant Elasticity of Substitution} (CES) utilities in the \emph{substitutes} regime. 
Birnbaum, Devanur, and Xiao~\cite{BDX11} interpret proportional as \emph{mirror descent} on the convex program
of Shmyrev~\cite{shmyrev2009algorithm} that captures equilibria in Fisher markets with \emph{linear} utilities, and also 
extends it to some other markets. Cheung, Cole, and Tao~\cite{CCT18} extend the approach in~\cite{BDX11} to show that 
proportional response converges for the entire range of CES utilities including \emph{complements}, with linear 
utilities on one extreme and \emph{Leontief} utilities on the other extreme. Cheung, Hoefer, and Nakhe~\cite{CHN19} show that 
the dynamics stays close to equilibrium even when the market parameters are changing slowly over time, once again 
for CES utilities.

\medskip

Wu and Zhang~\cite{wu2007proportional} also study a dynamics in an exchange setting. The Wu-Zhang dynamic is different from the one we study because it does not use money and converges only in the special case of the exchange market where each good has a common value (i.e. where the value of each player $i$ for each good $J$ is $v_{i,j} = v_j$). The Wu-Zhang dynamics is motivated
by application of exchange from networking, where it is preferable to not use money. We note that the dynamics in our paper is the correct generalization of proportional response as it was studied for Fisher markets (see Zhang \cite{Zhang11} and Birnaum-Devanur-Xiao\cite{BDX11}).
We also show that the dynamics from ~\cite{wu2007proportional} and our proportional response dynamics are functionally different, 
i.e. they have different trajectories in terms of allocations and utilities, even in the special case of the market where goods have common values and given the same starting configurations.
In fact, when the goods do not have a common value (i.e. the players can value different goods differently), we find an example of a market linear where the Wu-Zhang dynamics cycles (see Appendix~\ref{apx:comparison}).


\medskip

Branzei, Mehta, and Nisan~\cite{BMN18} generalized the definition of proportional response from Fisher markets to an exchange setting with production. The definition of proportional response in~\cite{BMN18} is the same as the one we use, except the amounts are fixed over time in our model. On the other hand, in the production market, players make new goods from the ones 
they acquire through the trading post mechanism, and the new goods are sold in the next iteration. There, the proportional response dynamics  
leads to \emph{growth} of the market, i.e. the amount of goods produced
grow over time, but also to growing inequality between the players on the most efficient 
production cycles and the rest. 

The study of convergence of t\^atonnement goes back at least as far as Arrow, Block, and Hurwicz~\cite{arrow1959stability}, 
which was soon followed by examples of cycling (see Scarf~\cite{scarf1960some} and Gale~\cite{Gale63}). For markets with weak gross
substitutes utilities (WGS), a polynomial time convergence of a discrete time process was shown by 
Codenotti, McCune, and Varadarajan~\cite{codenotti2005market}, and Cole and Fleischer~\cite{cole2008fast} showed fast convergence not just 
for static markets, but also ``ongoing'' markets. (See also Fleischer, Garg, Kapoor, Khandekar, and Saberi~\cite{FGKKS08}.) This was followed up by similar analysis 
for some markets with complementarities (Cheung, Cole, and Rastogi~\cite{cheung2012tatonnement}; Cheung and Cole~\cite{cheung2014amortized}; Avigdor, Rabani, and Yadgar~\cite{avigdor2014convergence}),
then for all ``Eisenberg-Gale'' markets in the Fisher model~\cite{cheung2019tatonnement}. Some of these analyses
apply to ongoing and/or asynchronous settings.

\medskip

Several approaches have been explored for the (centralized) computation of equilibria in a linear exchange market: 
the ellipsoid method (Jain~\cite{Jain}; Codenotti, Pemmaraju, and Varadarajan~\cite{codenotti2005polynomial}), interior point algorithms (Ye~\cite{ye2008path}), 
combinatorial flow based methods (Jain, Mahdian, and Saberi~\cite{jain2003approximating}; Devanur and Vazirani~ \cite{devanur2003improved}; Duan and Mehlhorn~\cite{duan2015combinatorial}; Duan, Garg, and Mehlhorn~\cite{duan2016improved}).
Extending the Fisher market version of~\cite{devanur2008market}) recently led to a strongly polynomial time 
algorithm (Garg and V{\'e}gh~\cite{garg2019strongly}). Other approaches include auction based algorithms~\cite{garg2006auction,bei2019ascending}, 
cell decomposition~\cite{deng2003complexity,devanurKannan2008market}, complementary pivoting~\cite{garg2015complementary}, 
and computational versions of Sperner's lemma~\cite{echenique2011finding,scarf1977computation}.
On the other hand, computing an equilibrium with even the simplest kind of complementarities is 
PPAD-hard~\cite{codenotti2006leontief,chen2009settling}. 
\medskip

There has been extensive work on understanding dynamics in games and auction settings under various behavioural models of the agents, such as best-response dynamics, multiplicative weight updates, fictitious play (e.g., \cite{FS99,KPT09,DDK15,MPP15,PP16,RST17,DS16,HKMN11,PpadP16,LST16}) and best response processes and other dynamics of learning how to bid in market settings \cite{CD11,nisan2011best,bhawalkar2011welfare,lucier2010price,BBN17,DK17,CDEG+14,BFR19}). 
In the former the focus has been on convergence to an equilibrium, preferable Nash, and if not then (coarse) correlated equilibria, and the rate of covergence. In the latter the focus has been on either convergence points and their quality (price-of-anarchy), or dynamic mechanisms such as ascending price auctions to reach efficient allocations (e.g. the Ausubel auction \cite{Ausubel04}).

\subsection{Difficulties and techniques}
The strongest convergence results for proportional response dynamics in Fisher markets are achieved via the mirror descent 
interpretation on suitable convex programs~\cite{BDX11,CCT18}. Devanur, Garg, and V{\'e}gh~\cite{devanur2016rational} show a similar 
(but more complicated) convex program for linear utilities in exchange markets. It is therefore tempting to conjecture that a 
similar mirror descent interpretation would extend to exchange markets as well, but unfortunately this doesn't seem to be the 
case. There are many difficulties, but the easiest to explain is the following. In the Fisher case, the proportional response bids 
are by definition in the feasible region of the convex program, which asks that the price of a good equal the total bids placed 
on it, and that the budget of a player equal the total of his own bids. In the exchange case, the price of a good is still the total 
bid placed on it by definition, but the total bid a player issues equals his earnings from the previous iteration, which may
differ from the total bid placed on its good in the current iteration. The convex program requires these equality constraints, 
so the bids don't stay inside the feasible region as in the Fisher case. 
\medskip

We use the KL divergence between equilibrium bids and the current bids as a Lyapunov function; this divergence was also used in \cite{Zhang11} when analyzing Fisher markets.
The convergence of utilities is the easiest, and follows almost exactly the analysis for Fisher markets. Beyond utilities, the 
cycling of bids presents more difficulties. We characterize the limit cycles by considering the zero set of a certain set of equations. 
We argue that their structure is like that of the price ratios in the limit cycles described above. The convergence of allocation 
follows from showing that the KL divergence between the (equilibrium and current) bids can be decomposed into a positive 
linear combination of the KL divergence between the prices and the KL divergences between the  allocations for each good. 
This also implies that the KL divergence between equilibrium prices and prices on a limit cycle must be an invariant. 

\medskip

For the lazy version we show that adding a suitably weighted KL divergence between equlibrium prices and current budgets 
to the Lyapunov function does the trick. This collapses the limit cycles so that the limit set becomes just the equilibria. This then gives 
us that the Lyapunov function must go to zero, which implies convergence of prices as well.

\subsection{Organization of the paper}
In Section~\ref{sec:prelims}, we give the definitions of the two variants of the proportional response dynamics, 
and market equilibria and state some useful properties. 
We also show numeric examples of cycling for the non-lazy version. 
Section~\ref{sec:lazy} defines a Lyapunov function for the dynamics, which shows convergence of utilities. 
We also show convergence of allocation and prices for the strictly lazy version. 
In Section~\ref{sec:cycling}, we characterize limit cycles, and show convergence of allocation for the non-lazy version. 
Section \ref{apx:comparison} shows a comparison with the tit-for-tat dynamic, including an example where tit-for-tat cycles. 
We can conclude with a discussion in Section 6.

\section{Preliminaries}
\label{sec:prelims} 

There are $n$ agents, each of which has one unit of an eponymous good. 
The goods are divisible. 
The agents have linear utilities given by a matrix $A = \{a_{i,j}\}$, where $a_{i,j} \geq 0$ is the valuation 
of agent $i$ for one unit of the good owned by agent $j$. 
We assume that every agent has a certain quantity of a numeraire to start with, which we call money. 
All the prices will be determined in terms of this numeraire. 
The agents don't have any utility for the money; 
it just facilitates exchange of goods.

The assumption on the valuations $a_{i,j}$ will be that a market equilibrium exists for the induced exchange market. An equilibrium does not exist when some agent has no edges; in some sense such agents do not participate in the economy and will not be considered in the dynamics.

\subsection{Proportional Response Dynamics in Exchange Economies:}
The proportional response dynamics describes a process in which the players come to the market every day with one unit of good and some budget, which is split into bids. The players bid on the goods, then the seller of each good allocates it in proportion to the bid amounts and collects the money from selling, which becomes its budget in the next round. Finally the players update their bids in proportion to the contribution of each good to their utility.

\begin{definition}[Proportional Response Dynamics]\label{def: PRD}
	The initial bids of player $i$ are $b_{i,j}(0)$, which are non-zero whenever $a_{i,j} > 0$. Then, at each time $t$, the following steps occur:
	
	\leftskip1.5em	\paragraph{Exchange of goods:} Each player $i$ brings one unit of its good and submits bids $b_{i,j}(t)$. The player receives an amount $x_{i,j}(t)$ of each good $j$, where 
	\begin{equation*}
	x_{i,j}(t) = \left\{
	\begin{array}{ll}
	\frac{b_{i,j}(t)}{\sum_{k=1}^n b_{k,j}(t)}, & \mbox{if} \; b_{i,j}(t) > 0\\
	0, & \mbox{otherwise}
	\end{array}
	\right.
	\end{equation*}
    \hspace{2.5mm} Each player $i$ computes the utility for the bundle acquired:
	$
	u_i(t) = \sum_{j=1}^{n} a_{i,j} \cdot x_{i,j}(t)\,.
	$
	\leftskip1.5em	\paragraph{Bid update:} Each player $i$ collects the money made from selling: $B_i(t+1) = \sum_{k=1}^n b_{k,i}(t)$ and updates his bids proportionally to the contribution of each good in his utility:
	$$\textstyle
	b_{i,j}(t+1) = \left( \frac{a_{i,j} \cdot x_{i,j}(t)}{u_{i}(t)} \right) \cdot B_i(t+1) \,.
	$$	
\end{definition}

\noindent \textbf{Normalization}: Without loss of generality, we can assume the total amount of money in the economy is $1$.

\begin{remark} The sum of bids on a good can be seen as its \emph{price}, so we will write $p_i(t) = \sum_{k=1}^n b_{k,i}(t)\,.$
\end{remark}

The next figure shows an instance with twelve players. The allocations, utilities, and prices oscillate initially but stabilize later.

\begin{figure}[H]
	\centering
	\subfigure[Allocations $x_{i,j}(t)$ over time.]
	{
		\includegraphics[scale=0.33]{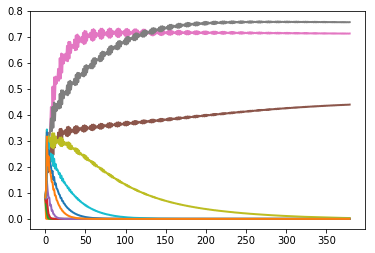}
	}
	\subfigure[Utilities $u_i(t)$ over time.]
	{
		\includegraphics[scale = 0.33]{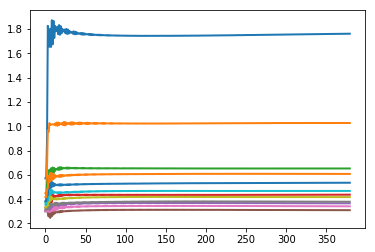}
	}
	\subfigure[Bids $b_{i,j}(t)$ over time.]
	{
		\includegraphics[scale = 0.33]{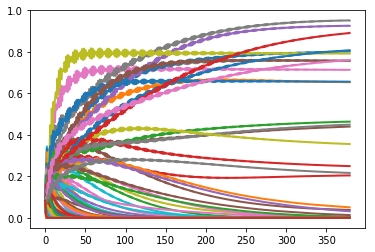}
	}
	\caption{Proportional response dynamics in an economy with $n=12$ players over $T = 380$ rounds: allocations, utilities, and bids.}
\end{figure}

Figure \ref{fig:cycling} shows an example trajectory in a ten player economy. In this example it can be seen 
that the allocations and utilities converge while the prices cycle. 

\begin{figure}[H]
	\centering
	\subfigure[Allocations $x_{i,j}(t)$ over time.]
	{
		\includegraphics[scale=0.331]{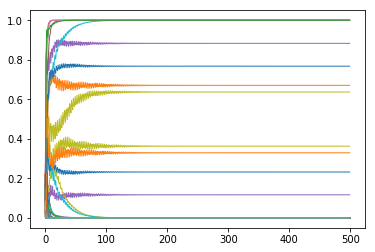}
		\label{fig:allocationscycling}
	}
	\subfigure[Utilities $u_i(t)$ over time.]
	{
		\includegraphics[scale = 0.331]{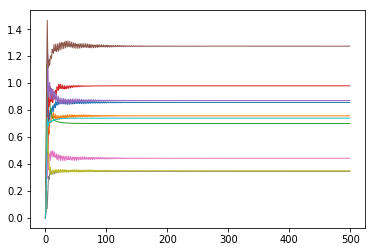}
		\label{fig:utilitiescycling}
	}
	\subfigure[Prices $p_i(t)$ over time.]
	{
		\includegraphics[scale = 0.331]{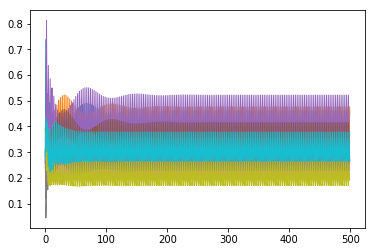}
		\label{fig:pricescycling}
	}
	\caption{Proportional response dynamics in an economy with $n=10$ players over $T = 500$ rounds: allocations, utilities, and prices. The set of players is divided in three groups: $\left(C_1, C_2, C_3\right)$, where $C_4 = C_1$, $C_1 = \{1, 2, 3\}$, $C_2 = \{4,5,6,7\}$, and $C_3 = \{8, 9, 10\}$. The valuations are such  that the players in group $C_i$ only value the goods of players in group $C_{i+1}$.}
	\label{fig:cycling}
\end{figure}

\subsection{Lazy Proportional Response Dynamics in Exchange Economies:}

Our first contribution is to define a more general framework for the proportional response dynamics, which can be seen as a lazy version of the dynamic. Each player will spend only some fraction $\alpha_i \in (0,1]$ of its total money in each round, while saving the remaining fraction of $1- \alpha_i$ in the bank. Then in the next round, the player collects the money it made from selling and takes out the money from the bank, which sum up to its total amount of money. Then again the player spends a fraction $\alpha_i$ of its total money, while saving the remainder of $1 - \alpha_i$ in the bank.

\medskip

Formally, we have the following definition.

\begin{definition}[Lazy Proportional Response Dynamics]
	\label{def:lazypr} 
	Initially each player $i$ has some amount of money $m_i(0)$. The player splits a fraction $\alpha_i \in (0, 1]$ of the money into initial bids $b_{i,j}(0)$, which are non-zero whenever $a_{i,j} > 0$. I.e.,
	the budget for spending at time $0$, which is $B_i(0)\triangleq \alpha_i \cdot m_i(0)$, is split into bids $b_{i,j}(0)$ satisfying 
	$\sum_{j} b_{i,j}(0) = \alpha_i \cdot m_i(0)$, and the player saves the remaining portion of money $(1-\alpha_i)\cdot m_i(0)$ in the bank.
	
	\vspace{2mm}
	At each time $t$, the next steps take place:
	
	\leftskip1.5em	\paragraph{{Exchange of goods:}} Each player $i$ brings one unit of its good and submits bids $b_{i,j}(t)$. The player receives an amount $x_{i,j}(t)$ of each good $j$, where 
	\begin{equation*}
	x_{i,j}(t) = \left\{
	\begin{array}{ll}
	\frac{b_{i,j}(t)}{\sum_{k=1}^n b_{k,j}(t)}, & \mbox{if} \; b_{i,j}(t) > 0\\
	0, & \mbox{otherwise}
	\end{array}
	\right.
	\end{equation*}
	\hspace{2.5mm} Each player $i$ computes the utility for the bundle acquired:
	$
	u_i(t) = \sum_{j=1}^{n} a_{i,j} \cdot x_{i,j}(t).
	$
	\leftskip1.5em	\paragraph{Bid update:} Each player $i$ collects the money made from selling: $p_i(t) = \sum_{k=1}^n b_{k,i}(t)$.
	
	The total money of player $i$ is the price of the good sold plus the money saved: 
	$m_i(t+1) = p_i(t) + (1- \alpha_i) B_i(t)/\alpha_i$. This gets split again into a fraction of $(1- \alpha_i)$, saved in the bank, and a fraction $\alpha_i$ that will be spent in the next round: $\textstyle
	B_i(t+1) = \alpha_i \cdot m_i(t+1) = \alpha_i \cdot p_i(t) + (1-\alpha_i) B_i(t).$
	
	The player updates his bids proportionally to the contribution of each good in his utility:
	$$\textstyle
	b_{i,j}(t+1) = \left( \frac{a_{i,j} \cdot x_{i,j}(t)}{u_{i}(t)} \right) \cdot B_i(t+1) 
	$$	
\end{definition}

Notice that the special case of this dynamic where $\alpha_i = 1$ for all agents $i$ is simply the proportional response dynamic of
Definition~\ref{def: PRD}. 

\medskip

In Figure~\ref{fig:noncycling_lazy} we show the dynamics for the same valuations and initial bids as in Figure \ref{fig:cycling}, but where the players  save half of their money in the bank at each time unit. 

\begin{figure}[H]
	\centering
	\subfigure[Allocations $x_{i,j}(t)$ over time.]
	{
		\includegraphics[scale=0.331]{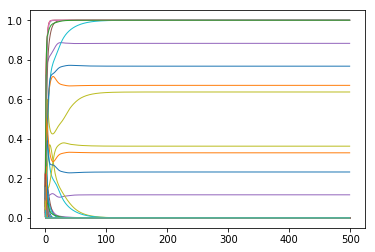}
		\label{fig:allocations_noncycling_lazy}
	}
	\subfigure[Utilities $u_i(t)$ over time.]
	{
		\includegraphics[scale = 0.331]{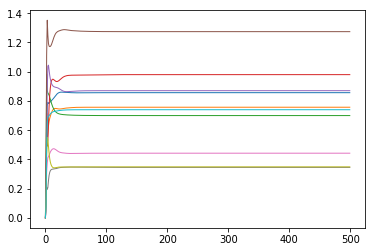}
		\label{fig:utilities_noncycling_lazy}
	}
	\subfigure[Prices $p_i(t)$ over time.]
	{
		\includegraphics[scale = 0.331]{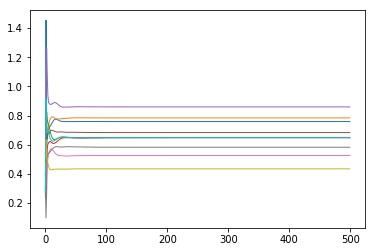}
		\label{fig:prices_noncycling_lazy}
	}
	\caption{Lazy version of the dynamics where the players have the same valuations and initial budgets and bids as in Figure \ref{fig:cycling}, but they also have a matching amount saved in the bank, and in each round they save $50\%$ of their money.}
	\label{fig:noncycling_lazy}
\end{figure}

\subsection{Market Equilibria}
We review the definition of market equilibria and some useful properties. 
We assume that for each good $j$ there is at least one player $i$ such that $\aij > 0$. 
An equilibrium is given by a set of prices $p_i$ for each $i\in [n]$ 
and a set of allocations $\xij\ge 0$ for each pair $i,j\in[n]$ such that 
\medskip 

\begin{description}
	\item [Market clearing:] the goods are all sold, i.e.,  $\forall~j, \sum_i \xij = 1$. 
	\item [Optimal allocation:] each buyer gets an optimal bundle of goods, i.e.,  $\forall~i, \xij$s maximize the sum $\sum_j \aij \xij $ subject to the budget constraint $\sum_j p_j \xij \leq p_i$.  
\end{description}

\medskip 

It is known that equilibrium utilities are unique.
Equilibrium allocations and prices may not be unique,
but equilibrium allocations, equilibrium prices,  the set of equilibria $(\xij,\log p_j)$, 
and the set of equilibria $(\bij, p_j)$ where $\bij = \xij p_j$, 
all form convex sets \cite{gale1976linear,cornet1989linear,mertens2003limit,florig2004equilibrium,devanur2016rational}.

\medskip

We note the following condition, which is guaranteed to hold at any equilibrium in an exchange market ($*$s indicate equilibrium quantities):
\begin{align} \label{eq:condition}
\textstyle
\frac{u_i^*}{p_i^*} = \frac{a_{i,j}}{p_j^*}, \mbox{for all} \; i,j \; s.t. \; x_{i,j}^* > 0
\end{align}

\paragraph{Fisher Markets:}
A variant of this model is the Fisher market, where there is a distinction between buyers and sellers. 
There are $n$ players and $n$ goods, and the utilities are as before. 
In addition, each player $i$ comes to the market with a fixed budget $B_i$. 
The equilibrium conditions are the same as before, except that the budget constraint of player $i$ is that $\sum_j p_j \xij \leq B_i$.
The following convex program, called the \emph{Eisenberg-Gale} convex program, 
captures equilibria in the Fisher market with linear utilities: the set of optimal solutions to this program  
is equal to the set of equilibrium allocations and utilities~\cite{eisenberg1959consensus}:
\[ \textstyle   \max \; \; \sum_{i \in [n]} B_i \cdot \log (u_i) \text{ s.t.} \]
\[ \textstyle\forall~ i \in [n], \; u_i \leq \sum_j \aij \xij ,\]
\[\textstyle \forall~j \in [n], \; \sum_i \xij \leq 1, \]
\[ \forall~ i,j \in [n], \; \xij \geq 0.\]
Moreover, equilibrium utilities and prices are unique in the Fisher market~\cite{eisenberg1959consensus}. 

\medskip

We conclude this section by noting that the fixed points of the lazy proportional response dynamics are market equilibria. 
\begin{proposition} \label{prop:fixed_points}
	Suppose $\alpha_i \in (0, 1]$ for each player $i$. Then any fixed point of the lazy proportional response dynamics is a market equilibrium.
\end{proposition}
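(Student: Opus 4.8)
The plan is to write down the conditions that characterize a fixed point $(b_{i,j},B_i)$ and read off the two equilibrium conditions directly. Since all quantities are constant in time, the budget update $B_i(t+1)=\alpha_i p_i(t)+(1-\alpha_i)B_i(t)$ collapses to $B_i=\alpha_i p_i+(1-\alpha_i)B_i$, and because $\alpha_i>0$ this forces $B_i=p_i$: at a fixed point every player spends exactly the revenue it collects. I will also use the identity $\sum_j b_{i,j}(t)=B_i(t)$, which holds at every step since $b_{i,j}(t+1)=(a_{i,j}x_{i,j}(t)/u_i(t))B_i(t+1)$ sums to $B_i(t+1)$ as $\sum_j a_{i,j}x_{i,j}(t)=u_i(t)$, and the identity $p_j=\sum_k b_{k,j}$. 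Together these pin down the budget side.

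First I would record the positivity facts that make the divisions below legitimate. Starting from a valid initial configuration, an easy induction shows $b_{i,j}(t)>0$ whenever $a_{i,j}>0$: the update multiplies $a_{i,j}x_{i,j}(t)=a_{i,j}b_{i,j}(t)/p_j(t)$ by $B_i(t+1)$, and both factors are positive. Hence $u_i(t)>0$ for every $i$ (each player values some good), and $B_i(t)>0$, $p_j(t)>0$ for all $t$ (here using that every good is valued by someone and, when $\alpha_i<1$, that the bank is never fully depleted). So at a fixed point $p_j>0$, $u_i>0$, $B_i>0$, and the support of player $i$'s bids is exactly $\{j:a_{i,j}>0\}$; I would state this non-degeneracy as the domain on which ``fixed point'' is meant.

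Now \emph{market clearing} is immediate: $\sum_i x_{i,j}=\sum_i b_{i,j}/p_j=p_j/p_j=1$ for each $j$. For \emph{optimal allocation} I would first check feasibility and tightness of the budget constraint, $\sum_j p_j x_{i,j}=\sum_j p_j\cdot(b_{i,j}/p_j)=\sum_j b_{i,j}=B_i=p_i$, and then use the fixed-point form of the bid update: for each $j$ with $b_{i,j}>0$ we have $b_{i,j}=(a_{i,j}x_{i,j}/u_i)B_i$ and $x_{i,j}=b_{i,j}/p_j$, so cancelling $b_{i,j}$ gives $a_{i,j}/p_j=u_i/B_i=u_i/p_i$ --- exactly condition~\eqref{eq:condition}. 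Thus the bang-per-buck $a_{i,j}/p_j$ equals the single constant $u_i/p_i$ on player $i$'s support and equals $0\le u_i/p_i$ off it; since the support already contains every good $i$ values, the allocation spends the whole budget on goods of maximum bang-per-buck, which is precisely an optimal linear-utility bundle. Market clearing together with optimal allocation is the definition of market equilibrium.

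The step I expect to be the main obstacle is exactly this optimality argument: one must argue that the common value $u_i/p_i$ is the \emph{maximum} bang-per-buck, not merely a common value on the support. A ``frozen'' state in which some player places a zero bid on a good it strictly values can be a fixed point of the raw update map and still violate optimality, so the positivity induction of the second paragraph --- which rests on the requirement in Definition~\ref{def:lazypr} that the initial bids be non-zero whenever $a_{i,j}>0$ --- is doing the real work. Everything else is bookkeeping with the two defining identities $\sum_j b_{i,j}=B_i$ and $p_j=\sum_k b_{k,j}$ and the consequence $B_i=p_i$ of $\alpha_i>0$.
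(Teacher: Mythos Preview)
Your proof is correct and follows essentially the same approach as the paper: deduce $B_i=p_i$ from the budget fixed-point equation (using $\alpha_i>0$), then use the bid fixed-point equation to obtain $a_{i,j}/p_j=u_i/p_i$ on the support, which is condition~\eqref{eq:condition}. You are more thorough than the paper in explicitly verifying market clearing and budget feasibility, and in spelling out the positivity invariant (support $=\{j:a_{i,j}>0\}$) needed to turn the equal bang-per-buck condition into genuine optimality---the paper leaves this implicit.
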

\begin{proof}
	Suppose $\vec{b}^*$ is a fixed point of the proportional response dynamics; let $x_{i,j}^*$ be the resulting fixed point allocation and $u_i^*$ the corresponding utilities. For each good $j$, let $p_j^* = \sum_{i=1}^n b^*_{i,j}$; this quantity can be interpreted as the price of the good at the fixed point.
	The budget update rule at the fixed point gives:
	$$B_i^* = \alpha_i \cdot p_i^* + (1-\alpha_i) B_i^* \iff B_i^* = p_i^*.$$
	
	Using the fact that $B_i^* = p_i^*$ we can write the bid update rule as:
	\begin{align}
	& b_{i,j}^{*} = \left(\frac{a_{i,j} \cdot x_{i,j}^*}{u_i^*} \right)\cdot B_i^*=  \left(\frac{a_{i,j} \cdot \left(\frac{b_{i,j}^{*}}{\sum_{k=1}^{n} b_{k,j}^{*}}\right)}{u_i^*}\right) \cdot B_i^* =
	\left(\frac{a_{i,j} \cdot \frac{b_{i,j}^{*}}{p_j^*}}{u_i^*}\right) \cdot p_i^*  \notag 
	\end{align}
	If $b_{i,j}^* = 0$ the identity holds trivially. For $b_{i,j}^* > 0$, the identity is equivalent to 
	$u_i^*/p_i = a_{i,j}/p_j$.
	This condition is the same as the market equilibrium condition for all strictly positive bids in the exchange economy, and so every fixed point is a market equilibrium.
\end{proof}


\section{Convergence of lazy proportional dynamic} 
\label{sec:lazy} 

In this section we study the dynamic and show convergence of utilities for any combination of the values $\alpha_i \in (0,1]$.

\subsection{Convergence of Utilities}

In this section we show that the utilities of the players converge for any valuations, initial configuration of the bids $b_{i,j}(0)$, and savings fractions $\alpha_i \in (0,1]$ of the players.
We let $u_i^*$ denote the (unique) equilibrium utilities. 
\begin{theorem} \label{thm:util_me}
	For any initial non-zero bids, the utilities of the players in the  dynamic converge to the market equilibrium utilities $u_i^*$, for any savings $\alpha_i$; that is, $\lim_{t \to \infty} u_i(t) = u_i^*$. 
\end{theorem}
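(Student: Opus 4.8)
The plan is to use a Lyapunov function built from the (generalized) Kullback--Leibler divergence between an equilibrium bid profile and the current bids, following the template of Zhang~\cite{Zhang11} for Fisher markets, but augmented by an extra term that absorbs the discrepancy created by the savings. Fix an equilibrium and write $b_{i,j}^{*}=x_{i,j}^{*}p_{j}^{*}$, so that $\sum_{j}b_{i,j}^{*}=p_{i}^{*}$ and $\sum_{i}b_{i,j}^{*}=p_{j}^{*}$, and set $\Phi(t)=\sum_{i,j}b_{i,j}^{*}\log\frac{b_{i,j}^{*}}{b_{i,j}(t)}$. A routine induction first records the facts I will need: non-degeneracy of the bids is preserved (so $b_{i,j}(t)>0$ whenever $a_{i,j}>0$), each price $p_{j}(t)=\sum_{k}b_{k,j}(t)$ and each utility $u_{i}(t)$ is positive, $x(t)$ is a feasible market-clearing allocation ($\sum_{i}x_{i,j}(t)=1$), the total money $M=\sum_{i}m_{i}(t)$ is invariant, and consequently $B_{i}(t)=\alpha_{i}m_{i}(t)\in(0,M]$ for all $t$; in particular all logarithms below are well defined. (I tacitly assume the standard non-degeneracy $p_{i}^{*},u_{i}^{*}>0$ for all $i$; otherwise one restricts attention to the players with $p_{i}^{*}>0$.)

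Next I would compute the one-step change of $\Phi$. Plugging the update rule $b_{i,j}(t+1)=\frac{a_{i,j}x_{i,j}(t)}{u_{i}(t)}B_{i}(t+1)$ and $b_{i,j}(t)=x_{i,j}(t)p_{j}(t)$ into $\Phi(t)-\Phi(t+1)=\sum_{i,j}b_{i,j}^{*}\log\frac{b_{i,j}(t+1)}{b_{i,j}(t)}$ gives, on the support of $b^{*}$, $\log\frac{b_{i,j}(t+1)}{b_{i,j}(t)}=\log\frac{a_{i,j}}{p_{j}(t)}+\log\frac{B_{i}(t+1)}{u_{i}(t)}$. Using the equilibrium condition~\eqref{eq:condition} in the form $a_{i,j}=u_{i}^{*}p_{j}^{*}/p_{i}^{*}$ on that support, together with the two marginal identities for $b^{*}$, and regrouping the sums, everything collapses to
\[
\Phi(t)-\Phi(t+1)\;=\;\sum_{i}p_{i}^{*}\log\frac{u_{i}^{*}B_{i}(t+1)}{p_{i}^{*}u_{i}(t)}\;+\;\sum_{j}p_{j}^{*}\log\frac{p_{j}^{*}}{p_{j}(t)}.
\]
For the non-lazy dynamic ($\alpha_{i}=1$, hence $B_{i}(t+1)=p_{i}(t)$) the price sums cancel and this is exactly $\sum_{i}p_{i}^{*}\log\frac{u_{i}^{*}}{u_{i}(t)}$. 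For general $\alpha_{i}$, I would apply concavity of $\log$ to $B_{i}(t+1)=\alpha_{i}p_{i}(t)+(1-\alpha_{i})B_{i}(t)$ to bound $\log B_{i}(t+1)\ge\alpha_{i}\log p_{i}(t)+(1-\alpha_{i})\log B_{i}(t)$ from below; after the price sums again cancel this yields $\Phi(t)-\Phi(t+1)\ge\sum_{i}p_{i}^{*}\log\frac{u_{i}^{*}}{u_{i}(t)}-\sum_{i}(1-\alpha_{i})p_{i}^{*}\log\frac{p_{i}(t)}{B_{i}(t)}$. The stray term is cancelled exactly by enlarging the potential to $\Psi(t):=\Phi(t)+\sum_{i}\frac{(1-\alpha_{i})p_{i}^{*}}{\alpha_{i}}\log\frac{p_{i}^{*}}{B_{i}(t)}$: a second use of the same concavity estimate (now on $\log\frac{B_{i}(t+1)}{B_{i}(t)}\ge\alpha_{i}\log\frac{p_{i}(t)}{B_{i}(t)}$) gives $\Psi(t)-\Psi(t+1)\ge\sum_{i}p_{i}^{*}\log\frac{u_{i}^{*}}{u_{i}(t)}$.

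Finally I would show this right-hand side is nonnegative and forces convergence. Since an exchange equilibrium is a Fisher equilibrium in which player $i$'s budget is the value $p_{i}^{*}$ of its endowment, the equilibrium utility vector $u^{*}$ is the optimal solution of the Eisenberg--Gale program with budgets $B_{i}=p_{i}^{*}$; as $x(t)$ is a feasible allocation with utility vector $u(t)$, this gives $\sum_{i}p_{i}^{*}\log u_{i}(t)\le\sum_{i}p_{i}^{*}\log u_{i}^{*}$, i.e.\ $\sum_{i}p_{i}^{*}\log\frac{u_{i}^{*}}{u_{i}(t)}\ge 0$. Hence $\Psi$ is non-increasing, and it is bounded below --- $\Phi(t)\ge\bigl(\textstyle\sum_{i}p_{i}^{*}\bigr)\log\frac{\sum_{i}p_{i}^{*}}{M}$ by the log-sum inequality (the total bid mass $\sum_{i}B_{i}(t)$ never exceeds the conserved total money $M$), and each $B_{i}(t)\le M$ --- so $\Psi(t)-\Psi(t+1)\to 0$ and therefore $\sum_{i}p_{i}^{*}\log\frac{u_{i}^{*}}{u_{i}(t)}\to 0$. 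Since $\{u(t)\}$ lies in the compact set of feasible utility vectors, on which $u\mapsto\sum_{i}p_{i}^{*}\log u_{i}$ is strictly concave with unique maximizer $u^{*}$, every limit point of $u(t)$ equals $u^{*}$, i.e.\ $\lim_{t\to\infty}u_{i}(t)=u_{i}^{*}$. The step I expect to be most delicate is isolating the right coefficient $\frac{(1-\alpha_{i})p_{i}^{*}}{\alpha_{i}}$ of the extra KL term and verifying that the two concavity estimates line up so the leftover budget terms cancel precisely; the computation of $\Phi(t)-\Phi(t+1)$ and the final compactness argument are essentially imported from the Fisher-market analysis, and the usual obstruction in this paper (the bids not staying in the feasible region of the convex program) is irrelevant here because the argument only uses feasibility of the \emph{allocation} $x(t)$, not of the bids.
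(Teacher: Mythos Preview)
Your proposal is correct and follows essentially the same approach as the paper: the Lyapunov function $\Psi(t)$ you define is exactly $\log f(t)$ in the paper's notation, and the Eisenberg--Gale optimality argument for $\sum_i p_i^*\log(u_i^*/u_i(t))\ge 0$ is identical. The only cosmetic difference is that the paper first derives an \emph{exact} one-step identity $f(t+1)=f(t)\,g(t)\,h(t)$ (Lemma~\ref{lem:iterative}) and then bounds $h(t)\le 1$ by a single weighted AM--GM, whereas you absorb that same $h$-term via two applications of the concavity of $\log$; your two Jensen steps together are precisely the inequality $h(t)\le 1$.
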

The high level idea is to show that a Lyapunov function for the dynamics is essentially the Kullback-Leibler (KL) divergence between the vector with the bids and prices at a market equilibrium, and bids and budgets for the dynamic. 
We let $p_i^*$ and $\xij^*$  denote some equilibrium price and allocation resp., and let $\bij^* = p_j^* \cdot \xij^*$. 
For each $i,j \in N$, define
\begin{align}
z_{i,j}(t) = 
\begin{cases} 
\left(\frac{b_{i,j}^*}{b_{i,j}(t)} \right)^{b_{i,j}^*} & if\; b_{i,j}^* > 0, b_{i,j}(t) > 0\\
1 & otherwise
\end{cases}
\end{align}
and 
\begin{align}
w_{i}(t) = 
\begin{cases} 
\left(\frac{p_i^*}{B_i(t)} \right)^{p_i^* \cdot \left(\frac{1-\alpha_i}{\alpha_i}\right)} & if\; p_i^*>0, B_i(t) > 0 \\
1 & otherwise
\end{cases}
\end{align}
Then our Lyapunov function is $f(t) = \prod_{i,j \in [n]} z_{i,j}(t) \cdot \prod_{i \in [n]} w_i(t)$. We will override notation and write $f(t)$ or $f(b(t))$, depending on whether it is necessary to emphasize the bids.

The key fact we use about $f(t)$ is an iterative formula 
relating $f(t+1)$ to $f(t)$. 
We first define 
the functions $h,g : \mathbb{N} \rightarrow \mathbb{R}$ by $$g(t) = \prod_{i: p_i^* > 0} \left(\frac{u_i(t)}{u_i^*} \right)^{p_i^*}$$ and 
\[ h(t) =  \prod_{i \in [n]} \left(\frac{p_i(t) \cdot B_i(t)^{\left(\frac{1-\alpha_i}{\alpha_i}\right)}}{\Bigl(\alpha_i \cdot p_i(t) + (1-\alpha_i) \cdot B_i(t)\Bigr)^{\frac{1}{\alpha_i}}}
\right)^{p_i^*}.\]
Then we have the following lemma.
\begin{lemma}
	\label{lem:iterative}
	With the lazy proportional response dynamics, for any vector of coefficients $\alpha_i$, where $\alpha_i \in [0, 1)$ for all $i$, we have the identity:
	\[ f(t+1) = f(t) \cdot g(t)\cdot h(t).\]
\end{lemma}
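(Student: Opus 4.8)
The plan is to verify the identity \(f(t+1) = f(t)\cdot g(t)\cdot h(t)\) by computing \(\log f(t+1) - \log f(t)\) directly from the definitions of the dynamics and matching it against \(\log g(t) + \log h(t)\). Since \(\log f(t) = \sum_{i,j} b_{i,j}^* \log(b_{i,j}^*/b_{i,j}(t)) + \sum_i p_i^* \bigl(\tfrac{1-\alpha_i}{\alpha_i}\bigr)\log(p_i^*/B_i(t))\) (on the strictly positive coordinates), it suffices to track how \(\log b_{i,j}(t)\) and \(\log B_i(t)\) evolve. First I would substitute the bid update rule \(b_{i,j}(t+1) = \bigl(a_{i,j} x_{i,j}(t)/u_i(t)\bigr)\cdot B_i(t+1)\) into \(\sum_{i,j} b_{i,j}^* \log b_{i,j}(t+1)\), splitting it as \(\sum_{i,j} b_{i,j}^* \log(a_{i,j} x_{i,j}(t)) - \sum_{i,j} b_{i,j}^* \log u_i(t) + \sum_{i,j} b_{i,j}^* \log B_i(t+1)\). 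The middle term contributes \(-\sum_i p_i^* \log u_i(t)\) since \(\sum_j b_{i,j}^* = p_i^*\) (using that equilibrium bids of player \(i\) sum to \(p_i^*\), which holds because \(B_i^* = p_i^*\) at equilibrium). Likewise the last term contributes \(\sum_i p_i^* \log B_i(t+1)\).

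Next I would use the budget update \(B_i(t+1) = \alpha_i p_i(t) + (1-\alpha_i)B_i(t)\) to rewrite the \(B_i\)-terms. Collecting everything, the change \(\log f(t+1)-\log f(t)\) should organize into three pieces: a piece \(\sum_i p_i^*\bigl(\log u_i(t) \text{ shifted against } \log u_i^*\bigr)\) contributed partly by the \(b^*_{i,j}\log(a_{i,j}x_{i,j}(t))\) terms — here I would use the equilibrium condition~\eqref{eq:condition}, \(a_{i,j}/p_j^* = u_i^*/p_i^*\) whenever \(x_{i,j}^* > 0\), together with \(b_{i,j}^* = x_{i,j}^* p_j^*\), to convert \(\sum_{i,j} b_{i,j}^* \log a_{i,j}\) into \(\sum_i p_i^* \log(u_i^*/p_i^*) + \sum_{i,j} b_{i,j}^* \log p_j^*\) and thus recognize the \(g(t)\) factor \(\prod_i (u_i(t)/u_i^*)^{p_i^*}\); a piece involving \(\sum_j p_j^* \log\bigl(\sum_k b_{k,j}(t)\bigr) = \sum_j p_j^* \log p_j(t)\) coming from \(\sum_{i,j} b_{i,j}^*\log x_{i,j}(t) = \sum_{i,j}b_{i,j}^*\log b_{i,j}(t) - \sum_j p_j^* \log p_j(t)\) (using \(\sum_i b_{i,j}^* = p_j^*\) on the clearing side); and the \(B_i\)-bookkeeping piece. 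Reassembling the \(p_j(t)\), \(B_i(t)\), \(B_i(t+1)\) terms with the correct exponents — \(p_i(t)\) to the first power, \(B_i(t)\) to the power \((1-\alpha_i)/\alpha_i\), and \(B_i(t+1)=\alpha_i p_i(t)+(1-\alpha_i)B_i(t)\) to the power \(1/\alpha_i\), all raised further to \(p_i^*\) — is exactly \(\log h(t)\).

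The main obstacle I anticipate is the careful bookkeeping of which sums are indexed over \(\sum_j b_{i,j}^* = p_i^*\) (buyer side) versus \(\sum_i b_{i,j}^* = p_j^*\) (seller/market-clearing side), and making sure the normalizations \(\sum_i B_i(t) = 1\) and \(\sum_i p_i^* = 1\) are consistently applied so that stray additive constants cancel — in particular one must check \(\sum_i p_i(t) = \sum_i B_i(t)\) does not hold in general in the exchange setting (money flows change the totals each round unless \(\alpha_i\) are uniform), so I would be careful not to assume it; the exponent \((1-\alpha_i)/\alpha_i\) on \(w_i\) is precisely engineered so that the \(B_i(t+1)\) produced by the \(b^*\)-sum (power \(1\) inside, i.e. \(p_i^*\) overall from \(\sum_j b_{i,j}^* = p_i^*\)) combines with the \(w_i(t+1)/w_i(t)\) ratio (powers \(\pm p_i^*(1-\alpha_i)/\alpha_i\)) to give net exponent \(p_i^* \cdot (1 + (1-\alpha_i)/\alpha_i) = p_i^*/\alpha_i\) on \(B_i(t+1)\), matching the denominator of \(h\). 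The other delicate point is handling coordinates where \(b_{i,j}^* = 0\) or \(b_{i,j}(t) = 0\): I would argue (as is standard, and presumably established earlier for non-degenerate bids) that \(b_{i,j}(t) > 0\) for all \(t\) whenever \(a_{i,j} > 0\), and that the terms with \(b_{i,j}^* = 0\) simply drop out of every sum, so the identity reduces to the support of the equilibrium bids without loss of generality.
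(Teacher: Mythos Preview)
Your proposal is correct and follows essentially the same route as the paper's proof: substitute the bid update rule into the $z_{i,j}$ factors, invoke the equilibrium condition~\eqref{eq:condition} to turn $a_{i,j}$ into $u_i^* p_j^*/p_i^*$, collapse the marginals via $\sum_j b_{i,j}^* = p_i^*$ and $\sum_i b_{i,j}^* = p_j^*$, and then combine the resulting $p_i(t)$, $B_i(t)$, $B_i(t+1)$ factors with the $w_i$ ratio to recognize $h(t)$. One small simplification: your worry about normalizations is unnecessary here, since the $p_i^*$ appearing in the numerator from $\prod_i (p_i^*/B_i(t+1))^{p_i^*}$ cancels directly against the $p_j^*$ in the denominator of $\prod_j (p_j(t)/p_j^*)^{p_j^*}$, with no appeal to $\sum_i B_i(t)=1$ or $\sum_i p_i(t)=\sum_i B_i(t)$.
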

\begin{proof}	
	The bid update rule gives
	$$
	b_{i,j}(t+1) = \left( \frac{a_{i,j} \cdot x_{i,j}(t)}{u_i(t)} \right) \cdot B_i(t+1) = 
	\frac{a_{i,j}}{u_i(t)} \cdot \frac{b_{i,j}(t)}{p_j(t)} \cdot B_i(t+1)
	$$
	
	Expanding $z_{i,j}(t+1)$, we obtain
	\begin{align} \label{eq:z+ij_t+1}
	z_{i,j}(t+1) &= \left(\frac{b_{i,j}^*}{\left( \frac{a_{i,j} \cdot x_{i,j}(t)}{u_i(t)} \right) \cdot B_i(t+1)} \right)^{b_{i,j}^*} \notag \\
	& = \left(\frac{b_{i,j}^*}{b_{i,j}(t)} \cdot \frac{u_{i}(t)}{a_{i,j}} \cdot \frac{p_j(t)}{B_i(t+1)} \right)^{b_{i,j}^*} \notag \\
	& = z_{i,j}(t) \cdot \left( \frac{u_i(t)}{a_{i,j}} \cdot \frac{p_j(t)}{B_i(t+1)} \right)^{b_{i,j}^*}
	\end{align}
	Using the equilibrium property (\ref{eq:condition}) of the exchange economy and identity (\ref{eq:z+ij_t+1}) gives
	$$
	z_{i,j}(t+1) =
	\begin{cases}
	z_{i,j}(t) \cdot \left( \frac{u_i(t)}{u_i^*} \cdot \frac{p_j(t)}{p_j^*} \cdot \frac{p_i^*}{B_i(t+1)} \right)^{b_{i,j}^*} & if \; b_{i,j}^* > 0\\
	0 & otherwise
	\end{cases} 
	$$
	Expanding $w_i(t+1)$ gives 
	$$
	w_i(t+1) = \left(\frac{p_i^*}{B_i(t+1)} \right)^{p_i^* \cdot \left(\frac{1-\alpha_i}{\alpha_i}\right)} = \left(\frac{p_i^*}{B_i(t)} \right)^{p_i^* \cdot \left(\frac{1-\alpha_i}{\alpha_i}\right)} \cdot \left(\frac{B_i(t)}{B_i(t+1)} \right)^{p_i^* \cdot \left(\frac{1-\alpha_i}{\alpha_i}\right)}
	$$
	
	Expanding $f(t+1)$ yields
	\begin{small}
		\begin{align}
		f(t+1) & = \prod_{i,j \in [n]} z_{i,j}(t+1) \cdot \prod_{i \in [n]} w_{i}(t+1) \notag \\
		& = \prod_{i,j \in N: b_{i,j}^* > 0} z_{i,j}(t) \cdot \left( \frac{u_i(t)}{u_i^*} \cdot \frac{p_j(t)}{p_j^*} \cdot \frac{p_i^*}{B_i(t+1)} \right)^{b_{i,j}^*} \cdot  \prod_{i \in [n]} \left(\frac{p_i^*}{B_i(t)} \right)^{p_i^* \cdot \left(\frac{1-\alpha_i}{\alpha_i}\right)} \cdot \left(\frac{B_i(t)}{B_i(t+1)} \right)^{p_i^* \cdot \left(\frac{1-\alpha_i}{\alpha_i}\right)} \notag \\
		& =f(t) \cdot \left( \prod_{i,j \in N: b_{i,j}^* > 0} \left( \frac{u_i(t)}{u_i^*} \cdot \frac{p_j(t)}{p_j^*} \cdot \frac{p_i^*}{B_i(t+1)} \right)^{b_{i,j}^*} \right) \cdot \prod_{i \in [n]}   \left(\frac{B_i(t)}{B_i(t+1)} \right)^{p_i^* \cdot \left(\frac{1-\alpha_i}{\alpha_i}\right)} \notag
		\end{align}
	\end{small}
	Separating the utility terms from the price terms gives 
	\begin{small}
		\begin{align}
		f(t+1)	&= f(t) \cdot 
		\prod_{i: p_i^* > 0} \left(\frac{u_i(t)}{u_i^*} \right)^{\sum_{j} b_{i,j}^*}  
		\prod_{i: p_i^* > 0} \left( \frac{p_i^*}{B_i(t+1)}\right)^{\sum_{j} b_{i,j}^*}  
		\prod_{j: p_j^* > 0} \left( \frac{p_j(t)}{p_j^*}\right)^{\sum_{i} b_{i,j}^*}   \prod_{i \in [n]}   \left(\frac{B_i(t)}{B_i(t+1)} \right)^{p_i^* \cdot \left(\frac{1-\alpha_i}{\alpha_i}\right)}  \notag \\
		& = f(t) \cdot 
		\prod_{i: p_i^* > 0} \left(\frac{u_i(t)}{u_i^*} \right)^{p_i^*} \cdot 
		\prod_{i: p_i^* > 0} \left( \frac{p_i^*}{B_i(t+1)}\right)^{p_i^*} \cdot 
		\prod_{j: p_j^* > 0} \left( \frac{p_j(t)}{p_j^*}\right)^{p_j^*}  \cdot  \prod_{i \in [n]}   \left(\frac{B_i(t)}{B_i(t+1)} \right)^{p_i^* \cdot \left(\frac{1-\alpha_i}{\alpha_i}\right)} 
		\notag
		\end{align}
	\end{small}
	Note the condition on $i$ that there exists $j$ for which $b_{i,j}^* > 0$ is equivalent to $p_i^* > 0$, which at the market equilibrium is met for every index $i \in [n]$. Similarly, the condition on $j$ that there exists $i$ such that $b_{i,j}^* > 0$ is equivalent to $p_j^* > 0$. Then we can rewrite $f(t+1)$ as follows:
	\begin{align}
	f(t+1) & = f(t) \cdot 
	\prod_{i \in [n] } \left(\frac{u_i(t)}{u_i^*} \right)^{p_i^*} \cdot 
	\prod_{i \in [n]} \left( \frac{p_i^*}{B_i(t+1)}\right)^{p_i^*} \cdot 
	\prod_{j \in [n]} \left( \frac{p_j(t)}{p_j^*}\right)^{p_j^*}  \prod_{i \in [n]}   \left(\frac{B_i(t)}{B_i(t+1)} \right)^{p_i^* \cdot \left(\frac{1-\alpha_i}{\alpha_i}\right)}  \notag \\
	& = f(t) \cdot \prod_{i \in [n] } \left(\frac{u_i(t)}{u_i^*} \right)^{p_i^*} \cdot \prod_{i \in [n] } \left(\frac{p_i(t)}{B_i(t+1)} \right)^{p_i^*}   \prod_{i \in [n]}   \left(\frac{B_i(t)}{B_i(t+1)} \right)^{p_i^* \cdot \left(\frac{1-\alpha_i}{\alpha_i}\right)} 
	\end{align}
	
	We also analyze the following term in the identity for $f(t+1)$:
	\begin{align}
	& \prod_{i \in [n] } \left(\frac{p_i(t)}{B_i(t+1)} \right)^{p_i^*}  \cdot \; \prod_{i \in [n]}   \left(\frac{B_i(t)}{B_i(t+1)} \right)^{ \frac{p_i^* (1-\alpha_i)}{\alpha_i}}\notag \\
	& = \prod_{i \in [n]} \; \frac{p_i(t)^{p_i^*} \cdot B_i(t)^{\left(\frac{p_i^* (1-\alpha_i)}{\alpha_i}\right)}}{B_i(t+1)^{\frac{p_i^*}{\alpha_i}}} \notag \\
	& = \prod_{i \in [n]} \frac{p_i(t)^{p_i^*} \cdot B_i(t)^{\left(\frac{p_i^*(1-\alpha_i)}{\alpha_i}\right)}}{\Bigl(\alpha_i \cdot p_i(t) + (1-\alpha_i) \cdot B_i(t)\Bigr)^{\frac{p_i^*}{\alpha_i}}}\\
	& = h(t).
	\end{align}
	Recall that $g(t) = \prod_{i: p_i^* > 0} \left(\frac{u_i(t)}{u_i^*} \right)^{p_i^*}$. Then we get $f(t+1) = f(t) \cdot g(t) \cdot h(t)$ as required.
\end{proof}

\begin{lemma} \label{lem:f_lb}
	For any market instance, there is a constant $c > 0$, which is dependent on the market parameters, so that $f(t) \geq c$ for all $t$.
\end{lemma}
\begin{proof}
	For any pair $i,j \in [n]$, define a constant $\Delta_{i,j} > 0$ as follows: if $b_{i,j}^* = 0$, then $\Delta_{i,j} = 1$. Otherwise, since the money is normalized to sum up to $1$, we have $b_{i,j}(t) \leq 1$ for all $t$, so we can set 
	\begin{align}
	\Delta_{i,j} = \left( b_{i,j}^*\right)^{b_{i,j}^*}  \leq 
	\left( \frac{b_{i,j}^*}{b_{i,j}(t)}\right)^{b_{i,j}^*}   = z_{i,j}(t)
	\end{align}
	Then $z_{i,j}(t) \geq \Delta_{i,j}> 0$.
	Similarly, for any $i \in [n]$, if $p_i^* = 1$, then define a constant $\delta_i = 1$; otherwise, $\frac{p_i^*}{B_i(t)} \geq p_i^*$, so we can set 
	\begin{align}
	\delta_i = \left(p_i^*\right)^{\frac{p_i^* \cdot (1-\alpha_i)}{\alpha_i}} \leq 
	\left(\frac{p_i^*}{B_i(t)} \right)^{\frac{p_i^*\cdot(1-\alpha_i)}{\alpha_i}} = w_i(t) 
	\end{align}
	Let $c = \prod_{i,j \in [n]} \Delta_{i,j} \cdot \prod_{i \in [n]} \delta_i$.
	Then by definition of $f$, we have that $f(t) \geq c > 0$ for all $t$.
\end{proof}

We now show convergence of utilities. 
\begin{proof}[Proof of Theorem \ref{thm:util_me}]
	
	Observe that $\log{g(t)} = \sum_{i \in N} p_i^* \cdot \log{\left(\frac{u_i(t)}{u_i^*}\right)}$. Consider the Fisher market obtained by setting the budget of each player to the equilibrium price of its own good. Then the expression $\sum_{i \in N} p_i^* \cdot \log{u_i}$, which is the Eisenberg-Gale objective, is maximized by $u_i^*$. It follows that $\sum_{i \in N} p_i^* \cdot \log{u_i(t)} < \sum_{i \in N} p_i^* \cdot \log{u_i^*}$ whenever the utilities are different, i.e. $u_i(t) \neq u_i^*$ for some player $i$. Thus $g(t) \leq 1$ for all  $t$, and the equality holds if and only if $u_i(t) = u_i^*$ for all $i$. 
	
	Using the weighted arithmetic mean-geometric mean inequality, we have that $$p_i(t)^{\alpha_i} \cdot B_i(t)^{{1-\alpha_i}} \leq \alpha_i \cdot p_i(t) + (1-\alpha_i) \cdot B_i(t)$$ for all $\alpha_i \in (0, 1]$. Thus $h(t) \leq 1$ for all $t$.
	
	\medskip 
	
	Since $g(t) <1$ and $h(t) \leq 1$, we obtain that $f(t+1) < f(t) $ for all the times $t$ where the utilities are not the equilibrium utilities.
	It follows that $f(t)$ is monotonically decreasing. By Lemma~\ref{lem:f_lb}, there exists a constant $c$ so that $f(t) \geq c > 0$ for all $t$, so $\lim_{t \to \infty} f(t)$ exists and is non-zero. Taking the limit of $t \to \infty$ in the expression $f(t+1) = f(t) \cdot g(t) \cdot h(t)$, we get that $\lim_{t \to \infty} g(t) =1$, and so the dynamic reaches the market equilibrium utilities.
\end{proof}

\paragraph{Rate of convergence:} We show a bound on the rate of convergence in the following sense: 
as in the proof of Theorem \ref{thm:util_me}, 
we consider the Eisenberg-Gale objective function, with 
the budget of each player set to the equilibrium price of his own good. 
We then show that the average of the Eisenberg-Gale objective over 
$T$ rounds is within $O(1/T)$ of the optimum.
(This also implies that the Eisenberg-Gale objective at the 
average of all the allocations is close to optimum, 
due to concavity of the objective.)
In other words, the total difference from the optimum summed over all rounds 
is bounded by a constant.


\begin{corollary} \label{cor:convergence_rate} When each $\alpha_i \in [0, 1)$, the average of the Eisenberg-Gale objective over $T$ rounds converges to the optimum at a rate of $O(1/T)$.
\end{corollary}
\begin{proof}
	From Lemma \ref{lem:iterative}, we have that $\log(f(t+1)) = \log(f(t)) + \log(g(t)) + \log(h(t))$. Recall from the proof of Theorem \ref{thm:util_me} that $\log(h(t)) \leq 0$ for all $t$. Then we get 
	\begin{align} 
	& \log(f(t+1)) \leq \log(f(t)) + \log(g(t)) \implies \notag \\
	& - \log(g(t)) \leq \log(f(t)) - \log(f(t+1)) \notag
	\end{align}
	The difference in the Eisenberg-Gale objective between the optimum and the current iteration is exactly $-\log(g(t))$. Summing this inequality over all rounds from $1$ to $T$, we get 
	\begin{align} \label{eq:log_g_sum}
	\sum_{t=1}^T - \log(g(t)) \leq \log(f(1)) - \log(f(T + 1))
	\end{align}
	We have that $\log(f(t)) \geq 0$ for all $t$. Then from (\ref{eq:log_g_sum}), we obtain 
	$$
	\sum_{t=1}^T \frac{- \log(g(t))}{T} \leq \frac{\log(f(1))}{T}
	$$
	Since $\log(f(1))$ is a constant, we get that $\sum_{t=1}^T -\log(g(t))/T \in O(1/T)$ as required.
\end{proof}

\subsection{Convergence of Bids and Allocations in the Lazy Dynamic}

In this subsection, we focus on the case when $\alpha_i < 1$ for all $i$. 
In this case, the limit points of the sequence must correspond to equilibrium prices. 

\begin{theorem}
	\label{thm:limitprices}
	If each $\alpha_i < 1$, then any limit point of the sequence $\bij(t)$  is an equilibrium. 
\end{theorem}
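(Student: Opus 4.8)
The plan is to push the Lyapunov function $f$ of Lemma~\ref{lem:iterative} harder than in the proof of Theorem~\ref{thm:util_me}. As observed there, $g(t)\le 1$ and $h(t)\le 1$, so $f$ is nonincreasing; and $\log f(t)=d_{KL}(b_{i,j}^*\mid\mid b_{i,j}(t))+\sum_i\bigl(\tfrac{1-\alpha_i}{\alpha_i}\bigr)p_i^*\log\bigl(p_i^*/B_i(t)\bigr)$ is bounded below (the KL term is nonnegative and each $B_i(t)\le 1$). Hence $f(t)$ converges to a finite positive limit, and feeding this into $f(t+1)=f(t)g(t)h(t)$ gives $g(t)\to 1$ and $h(t)\to 1$. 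The first yields $u_i(t)\to u_i^*$ as in Theorem~\ref{thm:util_me}. For the second, rewrite $h(t)=\prod_i\bigl(p_i(t)^{\alpha_i}B_i(t)^{1-\alpha_i}/(\alpha_i p_i(t)+(1-\alpha_i)B_i(t))\bigr)^{p_i^*/\alpha_i}$; each factor is $\le 1$ by weighted AM--GM with equality iff $p_i(t)=B_i(t)$, so $h(t)\to 1$ forces $p_i(t)/B_i(t)\to 1$. This is exactly where $\alpha_i<1$ is used, since for $\alpha_i=1$ the factor is identically $1$. Finally $f(t)\le f(0)$ forces $B_i(t)$ to stay bounded away from $0$ (again via the $w_i$-term, using $\alpha_i<1$), and, applying the same bound with each equilibrium in turn, $b_{i,j}(t)$ to stay bounded away from $0$ on every edge that lies in the support of some equilibrium.

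Next, fix a convergent subsequence $b(t_k)\to\bar b$ and set $\bar p_j=\sum_i\bar b_{i,j}$, $\bar x_{i,j}=\bar b_{i,j}/\bar p_j$. From the bounds above, $\bar p_j>0$, $\bar B_i:=\sum_j\bar b_{i,j}=\lim B_i(t_k)=\lim p_i(t_k)=\bar p_i$, the utility map is continuous at $\bar b$ so $u_i(\bar b)=u_i^*$, and $\bar b_{i,j}>0$ on every equilibrium-support edge. The allocation $\bar x$ clears the market and satisfies $\sum_j\bar p_j\bar x_{i,j}=\bar B_i=\bar p_i$. Since $u_i(\bar b)=u_i^*$, the feasible point $(\bar x,u^*)$ attains the optimum of the Eisenberg--Gale program of the Fisher market where player $i$'s budget is $p_i^*$; hence $\bar x$ is an equilibrium allocation of that Fisher market, whose equilibrium prices are the unique vector $p^*$. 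Consequently $\bar x_{i,j}>0\Rightarrow a_{i,j}/p_j^*=u_i^*/p_i^*=\max_k a_{i,k}/p_k^*$ and $\sum_j p_j^*\bar x_{i,j}=p_i^*$; in particular $(\bar x,p^*)$ is an exchange equilibrium.

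It remains to show $\bar p=p^*$, which is what makes $\bar b=\{\bar x_{i,j}\bar p_j\}$ itself an equilibrium. Both $\bar p$ and $p^*$ solve the fixed-point relation $q_i=\sum_j\bar x_{i,j}q_j$ — for $\bar p$ this is $\bar B_i=\bar p_i$, for $p^*$ it is the budget identity of the previous paragraph. The matrix $\bar x$ is column-stochastic and, as noted, its support contains all equilibrium-support edges. Assuming the economy is irreducible (otherwise decompose it into irreducible sub-economies and argue on each), the equilibrium trading structure is strongly connected, so the Markov chain associated with $\bar x$ is irreducible and $q_i=\sum_j\bar x_{i,j}q_j$ has a unique positive solution up to scaling; normalizing, $\bar p=p^*$. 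Hence $\bar b=\{\bar x_{i,j}p_j^*\}$ with $(\bar x,p^*)$ an exchange equilibrium, as claimed.

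I expect the last step — pinning down the limit \emph{prices}, not just the limit allocation — to be the main obstacle. The allocation part essentially mirrors the Fisher-market analysis; the prices are where the dynamic could a priori still oscillate, and the hypothesis $\alpha_i<1$ is exactly what rules this out, by forcing $\bar B_i=\bar p_i$ in the limit (every player ends up spending precisely the value of her own good, which is false along the limit cycles of the non-lazy dynamic described in Theorem~\ref{thm: limit alloc_cycling_bids}). Converting $\bar B_i=\bar p_i$ together with the $(\bar x,p^*)$-equilibrium into $\bar p=p^*$ is a uniqueness statement for the stationary vector of a column-stochastic matrix, which needs the trading graph to be connected; handling economies whose trading graph is reducible requires first splitting them into irreducible components.
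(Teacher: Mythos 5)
Your first two paragraphs track the paper's own argument: monotonicity of the Lyapunov function $f$ from Lemma~\ref{lem:iterative} gives $g(t)\to 1$ and $h(t)\to 1$, the first yielding equilibrium utilities at any limit point (Theorem~\ref{thm:util_me}) and the second, via the strict AM--GM case that exists only because $\alpha_i<1$, yielding $\bar B_i=\bar p_i$; identifying $\bar x$ as an equilibrium allocation paired with $p^*$ is exactly the content of Theorem~\ref{thm:universal_allocation}. Up to there the proposal is sound and essentially the paper's proof, with some extra (correct) bookkeeping showing bids stay bounded away from zero on equilibrium-support edges.

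The gap is in the last step. First, the theorem only asks that the limit point be \emph{an} equilibrium, i.e.\ that $\bar p$ be some equilibrium price supporting $\bar x$; proving $\bar p=p^*$ is both stronger than needed and, outside the connected case, not even well targeted, since equilibrium prices are unique up to scaling only when the equilibrium support graph is connected (Theorem~\ref{thm:unique_prices}). Your Perron--Frobenius argument does work when the equilibrium trading graph is strongly connected (then $\bar X$ is an irreducible column-stochastic matrix and $\bar X\bar p=\bar p$, $\bar Xp^*=p^*$ force $\bar p\propto p^*$), and in that case it is a legitimate, more explicit completion of the paper's terse ``easy to verify'' step. But the fallback ``decompose into irreducible sub-economies and argue on each'' does not close the reducible case, and that case is the crux, not a technicality. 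Two problems: (a) the dynamic does not decompose, since players keep bidding across components whenever $a_{i,j}>0$ even if no equilibrium trades there; and (b) within each component Perron--Frobenius pins prices only up to a per-component scalar, while being an equilibrium imposes \emph{inequalities across components} that none of your limit-point conditions see. Concretely, take $n=4$ with $a_{1,2}=a_{2,1}=a_{3,4}=a_{4,3}=1$ and additionally $a_{1,4}=1$: every equilibrium has the two disjoint $2$-cycles as its support, utilities all $1$, and equilibrium prices require $p_2\le p_4$. The configuration $\bar p=(0.3,0.3,0.2,0.2)$ with that same allocation satisfies everything you derive at a limit point --- equilibrium utilities, market clearing, $\bar B_i=\bar p_i$, support containing all equilibrium edges --- yet it is not an equilibrium, because player $1$ strictly prefers good $4$ at these prices. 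So those conditions alone cannot yield the conclusion; excluding such limit points requires an argument about the trajectory involving edges \emph{outside} the limit support, e.g.\ that the proportional-response update multiplies the ratio $b_{1,4}(t)/b_{1,2}(t)$ by $p_2(t)/p_4(t)$ each round, so this ratio cannot vanish while prices linger near a point with $p_2>p_4$. Without an argument of this kind your proof establishes the theorem only under the extra hypothesis that the equilibrium trading structure is strongly connected.
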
 
\begin{proof}
	Let $\vec{b}$ be a limit point of the bids with a converging subsequence $s = (s_1, s_2, \ldots)$. 
	As before, the utilities when the players use these bids converge to the equilibrium utility profile. 
	Further, we have that $lim_{n\rightarrow \infty} h(s_n)  = 1$, 
	which implies $lim_{n\rightarrow \infty} B_i(s_n)  = lim_{n\rightarrow \infty} p_i(s_n)$ for all $i$. 
	Thus for the limit point we must have that $\forall~i, \sum_j \bij = p_i$. 
	It is now easy to verify that the allocations and prices induced by the bids satisfy the equilibrium conditions.  
\end{proof}

With this in hand, we can now show that the equilibrium bids and allocations converge. Note that this automatically implies that the prices converge too. 
\begin{theorem}
	\label{thm:lazy_convergence} 
	If each $\alpha_i < 1$, then the bids and allocations of the lazy dynamics converge to a market equilibrium.
\end{theorem}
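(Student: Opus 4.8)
The plan is to upgrade the qualitative information from Theorems~\ref{thm:util_me} and~\ref{thm:limitprices} into genuine convergence of the whole bid vector, by combining the monotonicity of the Lyapunov function $f(t)$ with a compactness argument. First I would note that, by Theorem~\ref{thm:util_me}, $f(t)$ is monotonically decreasing and bounded below, so $f_\infty \triangleq \lim_{t\to\infty} f(t)$ exists and is strictly positive. This already forces the bids to be bounded away from $0$ on their support and bounded above (the sum of budgets is normalized to $1$), so the sequence $\vec b(t)$ lives in a compact set and has at least one limit point. By Theorem~\ref{thm:limitprices}, every limit point of $\vec b(t)$ is an equilibrium bid vector $\vec b^*$; moreover, since the utilities converge to $u_i^*$ and (via $h(s_n)\to 1$) the budgets satisfy $B_i = p_i$ in the limit, the corresponding allocations $x_{i,j}$ at any limit point are equilibrium allocations.

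The key step is to show there is only \emph{one} such limit point. Here I would exploit the fact that the Lyapunov function $f$ is, up to the logarithm, the KL divergence $d_{KL}(\vec b^* \,\|\, \vec b(t)) + \sum_i \tfrac{1-\alpha_i}{\alpha_i} d_{KL}(p_i^*\,\|\, B_i(t))$ \emph{relative to a fixed equilibrium point} $(\vec b^*, \vec p^*)$ chosen at the outset. For this to pin down the limit, I would argue as follows: suppose $\vec b$ is a limit point along a subsequence $s_n$; then $\log f(s_n) \to f_\infty$, and since $\vec b$ is an equilibrium with $B_i = p_i = p_i^{(\vec b)}$, the value $\log f_\infty$ equals $d_{KL}(\vec b^* \,\|\, \vec b)$ plus the (now vanishing, since $p_i^{(\vec b)} = p_i^*$ would require...) — and this is where care is needed. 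The cleaner route: fix the limit point $\vec b$ and re-run the entire Lyapunov analysis of Lemma~\ref{lem:iterative} with $(\vec b^*, \vec p^*)$ replaced by \emph{this} equilibrium $\vec b$ (legitimate, since $\vec b$ is an equilibrium, hence satisfies~\eqref{eq:condition}). This gives a second Lyapunov function $\tilde f(t) = \exp\bigl(d_{KL}(\vec b \,\|\, \vec b(t)) + \sum_i \tfrac{1-\alpha_i}{\alpha_i} d_{KL}(p_i^{(\vec b)} \,\|\, B_i(t))\bigr)$ that is also monotonically nonincreasing in $t$, hence converges. Along the subsequence $s_n$, $\vec b(s_n)\to \vec b$ forces $\tilde f(s_n)\to 1$; by monotonicity $\tilde f(t)\to 1$ for \emph{all} $t$; and therefore, since $d_{KL}(\vec b\,\|\,\cdot)$ is a proper divergence vanishing only at $\vec b$, every limit point of $\vec b(t)$ equals $\vec b$. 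Hence $\vec b(t)$ converges, its limit is an equilibrium bid vector, and the allocations $x_{i,j}(t) = b_{i,j}(t)/p_j(t)$ converge to the corresponding equilibrium allocation (prices being continuous and bounded away from zero).

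The step I expect to be the main obstacle is verifying that the "re-centered" Lyapunov function $\tilde f(t)$ is genuinely monotone nonincreasing, i.e.\ that the argument of Lemma~\ref{lem:iterative} together with the $g(t)\le 1$, $h(t)\le 1$ bounds goes through verbatim when $\vec b^*$ is replaced by the particular limit equilibrium $\vec b$. The $h(t)\le 1$ bound is agnostic to the choice of equilibrium (it is just weighted AM-GM), so the only thing to check is that the $g$-type term $\tilde g(t) = \prod_i (u_i(t)/u_i^*)^{p_i^{(\vec b)}}$ is still $\le 1$ — but this follows because $\sum_i p_i^{(\vec b)} \log u_i$ is again a (differently weighted, but still valid) Eisenberg--Gale objective whose maximizer is $u_i^*$, exactly as in the proof of Theorem~\ref{thm:util_me}; here I use that equilibrium utilities are unique. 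A secondary subtlety is ensuring the supports match up — that $b_{i,j}(t) > 0$ exactly when $a_{i,j}>0$ for all $t$ (immediate from the bid-update rule and the non-degeneracy of the initial bids), and that the limit equilibrium $\vec b$ has $b_{i,j} > 0$ on a superset of the relevant support so the KL divergence $d_{KL}(\vec b \,\|\, \vec b(t))$ is finite; if $\vec b$ has a zero coordinate where $b_{i,j}(t) > 0$ the corresponding term only helps (it stays bounded), so this causes no trouble. Once monotonicity of $\tilde f$ is in hand, the uniqueness-of-limit-point argument and the passage to allocations are routine.
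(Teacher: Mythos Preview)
Your proposal is correct and follows essentially the same route as the paper: pick a limit point $\vec b$ (an equilibrium by Theorem~\ref{thm:limitprices}), re-center the Lyapunov function at $\vec b$, and use its monotonicity to force the full sequence to converge to $\vec b$. You are in fact more explicit than the paper about the one nontrivial step --- verifying that the re-centered $\tilde g(t)\le 1$ via the Eisenberg--Gale program with budgets $p_i^{(\vec b)}$ --- which the paper's proof uses but leaves implicit; the paper phrases the endgame as a two-limit-point contradiction, but the content is identical.
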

\begin{proof}
	Suppose towards a contradiction that there exists a starting configuration $b_{i,j}(0)$ for which the sequence of bids does not converge.
	Then since the set of feasible bid matrices is compact, there exist two subsequences of rounds $s' = (s'_1, s'_2, \ldots)$ and $s'' = (s''_1, s''_2, \ldots)$ 
	so that the bids converge to two limit bid matrices $\vec{b}'$ and $\vec{b}''$ along these subsequences, respectively.
	
	From Theorem~\ref{thm:limitprices} we have that both $\vec{b}'$ and $\vec{b}''$ are market equilibrium bids. The main idea is to consider these two limit points. Both of these  will satisfy the equilibrium allocation conditions. However, the two sets of bids also have to be different, which will give a contradiction.
	
	Define the function $f$ with respect to the limit bids $\vec{b}'$ (i.e. set $\vec{b}^* = \vec{b}'$ in the definition of $f$). Then we have that $\log{f(\vec{b}')} = 0$ and $\log{f(\vec{b})} > 0$ for all $\vec{b}\neq \vec{b}'$. 
	Let $\vec{p}'$ and $\vec{p}''$ be the price vectors corresponding to bids $\vec{b}'$ and $\vec{b}''$. Since the bids along the sequences $s'$ and $s''$ converge to different limits, by continuity of the KL divergence there exist $\delta > 0$ and an index $k_0 \in \mathbb{N}$ so that $ |\log{f(\vec{b}(s'_k))} - \log{f(\vec{b}(s''_k))}| > \delta$ for all $k \geq k_0$. This implies that $\lim_{k \to \infty} \log{f(\vec{b}(s''_k))} >0$, which is a contradiction. Thus the subsequence $s''$ cannot exist and the bids converge in the limit as required.
\end{proof}

\section{Cycling behavior and convergence of allocations in the non-lazy dynamics}
\label{sec:cycling} 

Note the bids may in fact cycle in the proportional response dynamics (the strictly non-lazy version), thus we do not necessarily obtain in the limit the market equilibrium prices. Consider the following economy.
\begin{example}[Bid cycling in the non-lazy dynamic]
	Let $N = \{1, 2\}$, with utilities $a_{1,1} = a_{2,2} = 0$, $a_{1,2} = a_{2,1} = 1$ and initial bids $b_{1,1}(0) = b_{2,2}(0) = 0$, $b_{1,2}(0) = 1/3$, $b_{2,1}(0) = 2/3$. Then at the market equilibrium the prices of the two goods are equal, so $p_1^* = p_2^* = 1/2$, with $b_{1,2}^* = b_{2,1}^* = 1/2$. Thus we have $f(0) = \left( \frac{1/2}{1/3} \right)^{1/2} \cdot \left( \frac{1/2}{2/3}\right)^{1/2}$. Since the players swap their budgets throughout the dynamic, we get that $f(2k+1) = \left( \frac{1/2}{2/3} \right)^{1/2} \cdot \left( \frac{1/2}{1/3}\right)^{1/2} = f(2k) = f(0)$ for all $k \in \mathbb{N}$.
\end{example}

More generally, prices do not converge in bipartite graphs when the two sides are unbalanced. For example, consider any economy where the underlying graph is bipartite. Suppose the initial sums of the budgets on the two sides of the graph are different. Then the prices do not converge. This follows from the fact that the two sides will keep swapping their money in each iteration. 

This phenomenon is analogous to what happens with periodic Markov chains.
We note that cycling can happen even if the valuation matrix has all the entries non-zero and the bids are strictly positive. Rather, what determines cycling are the consumption graph in the equilibrium allocation together with the initial distribution of bids. 

\medskip

We first show that the allocation always converges and 
later characterize the instances where the bids cycle, and start by showing a basic property of the market equilibrium (for which we could not find a reference).
The following proposition states that for a linear exchange market, any equilibrium allocation can be paired with any equilibrium price to get an equilibrium pair of allocation and price. 
\begin{proposition}
	\label{thm:universal_allocation} 
	Let $\vec{p^*}$ be any equilibrium price, and $\vec{u^*}$ be equilibrium utilities. 
	Let $\vec{x}$ be any feasible allocation that gives equilibrium utilities to all players, i.e., 
	$\forall~i$, $\sum_j \aij \xij = u_i^*$. 
	Then the pair $(\vec{x}, \vec{p^*})$ is an equilibrium. 
\end{proposition}
\begin{proof}
	Let $(\vec{x^*}, \vec{p^*})$ be a pair of equilibrium allocation and price for the exchange market. 
	Consider the Fisher market with budgets $B_i = p_i^*$ for all $i$, denoted by $\fisher(\vec{p^*})$. 
	\begin{enumerate}
		\item Then the pair $(\vec{x^*}, \vec{p^*})$ is also an equilibrium of this Fisher market, since the equilibrium conditions for the exchange market 
		directly imply the equilibrium conditions for the Fisher market. 
		\item Since Fisher markets have a unique equilibrium price \cite{eisenberg1959consensus}, this price must be $\vec{p^*}$. 
		\item Now suppose $\vec{x}$  be any other allocation as in the hypothesis of the Theorem. This implies that $\vec{x}$ is an optimal solution to the Eisenberg-Gale convex program corresponding to $\fisher(\vec{p^*} )$, and therefore $(\vec{x}, \vec{p^*})$ is also an equilibrium of $\fisher(\vec{p^*} )$. 
		\item This implies that  $(\vec{x}, \vec{p^*})$ is also an equilibrium for the exchange market. Once again, the equilibrium conditions are essentially identical. 
	\end{enumerate}
\end{proof}

\begin{proposition}
	\label{thm:unique_prices} 
	Suppose that there is an equilibrium allocation such that the support graph on the set of nodes is connected. Then the equilibrium prices are unique up to a scaling factor
\end{proposition}
\begin{proof}
	Suppose that $i$ and $j$ are such that $\xij > 0$. 
	Then from the condition (\ref{eq:condition}), we get that the ratio of their equilibrium prices must equal $\aij/u_i^*$, which is independent of the choice of the equilibrium, since equilibrium utilities are unique. 
	Thus if the support graph is connected, the ratio of any two equilibrium prices remains the same, which means the equilibrium prices form a ray. 
\end{proof}

\begin{theorem}
	\label{thm:nonlazy_allocation}
	The allocation in the (non-lazy) proportional response dynamics converges to a market equilibrium allocation.
\end{theorem}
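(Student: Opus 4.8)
The plan is to combine the monotone Lyapunov function $f$ of Section~\ref{sec:lazy}, specialized to the non-lazy case $\alpha_i=1$ (where $h(t)\equiv 1$, so Lemma~\ref{lem:iterative} gives $f(t+1)=f(t)\cdot g(t)$ with $g(t)\le 1$, hence $f$ is non-increasing and convergent), with a decomposition of $\log f$ into a ``price part'' and per-good ``allocation parts''. Here $f$ is built from a chosen reference market equilibrium $(\vec x^*,\vec p^*)$, and the monotonicity holds for \emph{any} such reference (the proofs of Lemma~\ref{lem:iterative} and Theorem~\ref{thm:util_me} only use the equilibrium condition~\eqref{eq:condition} and that $\sum_i p^*_i\log u_i$ is maximized at $u^*$, both valid at every equilibrium). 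Writing $b_{i,j}(t)=x_{i,j}(t)\,p_j(t)$ and $b^*_{i,j}=x^*_{i,j}\,p^*_j$, using that $(x_{i,j}(t))_i$ and $(x^*_{i,j})_i$ are probability vectors over $i$ (good $j$ is fully sold every round and at equilibrium), and normalizing $\sum_j p_j(t)=\sum_j p^*_j=1$ (total money is conserved in the non-lazy dynamic), one obtains
\begin{align}
\log f(t)=d_{KL}\!\big(\vec p^*\,\|\,\vec p(t)\big)+\sum_{j\in[n]}p^*_j\cdot d_{KL}\!\big(x^*_{\cdot,j}\,\|\,x_{\cdot,j}(t)\big),
\end{align}
a sum of non-negative terms with strictly positive weights $p^*_j$.

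Next I would identify the limit points. Fix one market equilibrium $(\vec x^*,\vec p^*)$. For any subsequential limit $\vec b$ of $\vec b(t)$, Theorem~\ref{thm:util_me} and continuity force the utilities at $\vec b$ to equal $u^*_i$; since the corresponding allocation $\vec x$ is column-stochastic it is feasible for the Eisenberg--Gale program of the Fisher market in which player $i$ has budget $p^*_i$, and it attains the optimal value $\sum_i p^*_i\log u^*_i$, so it is an optimal solution of that program, i.e.\ an equilibrium allocation of that Fisher market. Because the exchange-equilibrium conditions at prices $\vec p^*$ with incomes $p^*_i$ are exactly the Fisher conditions, $(\vec x^*,\vec p^*)$ is itself a Fisher equilibrium of this market, and Fisher equilibrium prices are unique; hence $\vec p^*$ is the unique equilibrium price of that Fisher market, and $(\vec x,\vec p^*)$ is a market equilibrium. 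Thus \emph{every} limit-point allocation is an equilibrium allocation supported by the \emph{same} price vector $\vec p^*$ (the limiting prices themselves are positive, since every good is always demanded, so the KL terms above are continuous there).

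Finally I would show the limit allocation is unique. Suppose $\vec x(t)$ had two distinct limit-point allocations $\vec x',\vec x''$, attained along $\vec b(s'_k)\to(\vec x',\vec p')$ and $\vec b(s''_k)\to(\vec x'',\vec p'')$. By the previous step $(\vec x',\vec p^*)$ and $(\vec x'',\vec p^*)$ are both market equilibria. Define $f$ with reference $(\vec x',\vec p^*)$; since the price vector is still $\vec p^*$, the factor $g$ is unchanged, so $f$ is again non-increasing and converges to some $L'$. Evaluating the displayed identity along $s'_k$ kills the allocation term and gives $\log L'=d_{KL}(\vec p^*\|\vec p')$, while along $s''_k$ it gives $\log L'=d_{KL}(\vec p^*\|\vec p'')+\sum_j p^*_j\,d_{KL}(x'_{\cdot,j}\|x''_{\cdot,j})$; subtracting, $\sum_j p^*_j\,d_{KL}(x'_{\cdot,j}\|x''_{\cdot,j})=d_{KL}(\vec p^*\|\vec p')-d_{KL}(\vec p^*\|\vec p'')$. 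Running the same argument with reference $(\vec x'',\vec p^*)$ yields $\sum_j p^*_j\,d_{KL}(x''_{\cdot,j}\|x'_{\cdot,j})=d_{KL}(\vec p^*\|\vec p'')-d_{KL}(\vec p^*\|\vec p')$. Adding the two gives $\sum_j p^*_j\big(d_{KL}(x'_{\cdot,j}\|x''_{\cdot,j})+d_{KL}(x''_{\cdot,j}\|x'_{\cdot,j})\big)=0$; since $p^*_j>0$ and KL divergences are non-negative, $x'_{\cdot,j}=x''_{\cdot,j}$ for every $j$, i.e.\ $\vec x'=\vec x''$, a contradiction. Hence $\vec x(t)$ has a unique limit point; being bounded it converges, and by the previous step the limit is a market equilibrium allocation.

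The main obstacle is the middle step: showing that \emph{all} limit-point allocations are supported by one and the same price vector $\vec p^*$. This is precisely what makes the symmetric two-reference cancellation in the last step work, since with different supporting prices for $\vec x'$ and $\vec x''$ the price terms would not cancel. It hinges on reducing to the Fisher market with incomes $p^*_i$ and invoking uniqueness of Fisher equilibrium prices, together with the fact (Theorem~\ref{thm:util_me}) that all limit utilities already equal the unique equilibrium utilities $u^*$; one must also take care with the degenerate possibility of some good's price vanishing in the limit, which I expect can be ruled out because each good has a buyer and bids on it stay bounded below.
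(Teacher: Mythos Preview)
Your proposal is correct and follows the same overall strategy as the paper: the KL-based Lyapunov $f$ (specialized to $\alpha_i=1$, so $h\equiv 1$), the decomposition of $\log f$ into a price-KL term plus a $p^*_j$-weighted sum of per-good allocation-KL terms (the paper states this as Lemma~\ref{lem:kldecomposition}), and the fact that every limit-point allocation is an equilibrium allocation supported by any fixed equilibrium price $\vec p^*$ (the paper isolates this as Theorem~\ref{thm:universal_allocation}; your Fisher-market reduction is exactly their proof).

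The one genuine difference is the final contradiction step. The paper argues by \emph{minimization}: among all limit points $\vec b'$ and matching equilibrium bids $\vec b^*=(\vec x',\vec p^*)$, it selects the pair minimizing $\sum_j p^*_j\log(p'_j/p^*_j)$, invoking compactness of the limit set (citing \cite{khalil2002nonlinear}) to guarantee the minimum is attained; comparing with any other limit $\vec b''$ then forces the allocation-KL term to vanish. You instead run the Lyapunov with \emph{two} references $(\vec x',\vec p^*)$ and $(\vec x'',\vec p^*)$, obtain two equalities whose price terms are exact negatives of each other, and add to kill them. Your symmetric trick is cleaner: it avoids the extremal selection and the appeal to compactness of the limit set, at the modest cost of checking (as you do) that the decomposition and Lemma~\ref{lem:iterative} go through for any equilibrium reference, not just a fixed one. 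Both arguments need the same support/positivity care you flag at the end; the boundedness of $\log f$ indeed forces $b_{i,j}(t)$, and hence $p_j(t)$, to stay bounded away from zero on the support of the reference, so the KL terms are continuous at the limit points.
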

\begin{proof}
	For any initial bids $b_{i,j}(0)$, there is a subsequence of bids converging to some limit $\vec{b}'$. We note that the limit $\vec{b}'$ may not be market equilibrium bids. 
	However, the allocation corresponding to $\vec{b}'$ must give equilibrium utilities, by Theorem~\ref{thm:util_me}. 
	Any such allocation is also an equilibrium, see Proposition~\ref{thm:universal_allocation}.
	Let this allocation be $\vec{x}'= \vec{x^*}$, the corresponding equilibrium price be $\vec{p^*}$ and the corresponding bids be $\vec{b^*}$. 

	Let $\vec{p}'$ be the prices induced by the bids $\vec{b}'$. Since the allocation under the two bid profiles is the same, we get $$
	\frac{b_{i,j}'}{b_{i,j}^*} = \frac{\left(\frac{b_{i,j}'}{p_j'}\right)}{\left(\frac{b_{i,j}^*}{p_j^*}\right)} \cdot \frac{p_j'}{p_j^*} = \frac{x_{i,j}'}{x_{i,j}^*} \cdot \frac{p_j'}{p_j^*} = \frac{p_j'}{p_j^*}
	$$
	
	Let $j \in [n]$ be arbitrary but fixed. Then we obtain 
	\begin{align} \label{eq:bprime_bstar}
	\sum_{i \in [n]} b_{i,j}^* \cdot \log{\frac{b_{i,j}'}{b_{i,j}^*}} = \sum_{i \in [n]} b_{i,j}^* \cdot \log{\frac{p_j'}{p_j^*}} = p_j^* \cdot \log{\frac{p_j'}{p_j^*}} 
	\end{align}
	Suppose for the sake of contradiction that the sequence of allocations of the dynamic does not converge. Then there exists another subsequence of bids converging to a different limit $\vec{b}'' \neq \vec{b}'$ which has the property that $\vec{x}'' \neq \vec{x}'$, where $\vec{x}''$ is the allocation at the bid profile $\vec{b}''$. 
	We use an identity which we state in Lemma~\ref{lem:kldecomposition} to obtain
	\[
	\sum_{i \in [n]} b_{i,j}^* \cdot \log{\frac{b_{i,j}''}{b_{i,j}^*}} =
	p_j^* \cdot \log{\frac{p_j''}{p_j^*}} + p_j^* \cdot \sum_{i \in [n]} x_{i,j}' \cdot \log{\frac{x_{i,j}''}{x_{i,j}^*}}
	.\]
	
	From the proof of Theorem \ref{thm:util_me}, the KL divergence between the fixed point bids $\vec{b^*}$ and the dynamic bids $\vec{b}(t)$ is decreasing and converges to some constant $\alpha \geq 0$. Since both $\vec{b}', \vec{b}''$ are limit points of $\vec{b}(t)$, this implies that 
	\begin{align} 
	&\sum_{j \in [n]} \sum_{i \in [n]} b_{i,j}^* \cdot \log{\frac{b_{i,j}'}{b_{i,j}^*}} = \sum_{j \in [n]} \sum_{i \in [n]} b_{i,j}^* \cdot \log{\frac{b_{i,j}''}{b_{i,j}^*}} \iff \notag \\
	& \sum_{j \in [n]} p_j^* \cdot \log{\frac{p_j'}{p_j^*}}  = \sum_{j \in [n]} p_j^* \cdot \log{\frac{p_j''}{p_j^*}} +\sum_{j \in [n]} p_j^* \cdot \sum_{i \in [n]} x_{i,j}' \cdot \log{\frac{x_{i,j}''}{x_{i,j}'}}
	\end{align}
	Among all possible choices of limit bid profiles $\vec{b}'$ and market equilibrium bids $\vec{b^*}$ with the same allocation, select the pair $(\vec{b}', \vec{b^*})$ that minimizes the sum $\sum_{j \in [n]} p_j^* \cdot \log{\frac{p_j'}{p_j^*}}$; this is possible since the bid space is compact so the infimum of a set of accummulation points is itself an accummulation point (See Lemma 4.1 in \cite{khalil2002nonlinear} for example). 
	For this choice of bids, we obtain  
	\begin{align}
	\sum_{j \in [n]} p_j^* \cdot \log{\frac{p_j''}{p_j^*}} \geq \sum_{j \in [n]} p_j^* \cdot \log{\frac{p_j'}{p_j^*}} =\sum_{j \in [n]} p_j^* \cdot \log{\frac{p_j''}{p_j^*}} +  \sum_{j \in [n]} p_j^* \cdot \sum_{i \in [n]} x_{i,j}' \cdot \log{\frac{x_{i,j}''}{x_{i,j}'}}
	\end{align}
	Note that each term $\sum_{i \in [n]} x_{i,j}' \cdot \log{\frac{x_{i,j}''}{x_{i,j'}}}$ represents the KL-divergence between the allocation of good $i$ at the limits $x'$ and $x''$. Since the KL divergence is always non-negative, we get that in fact $\sum_{i \in [n]} x_{i,j}' \cdot \log{\frac{x_{i,j}''}{x_{i,j}'}} = 0$ for each $j$, so the allocation at the limit $\vec{b}''$ is the same as at $\vec{b}'$. Thus the assumption that the limit $\vec{b}''$ had a different allocation from $\vec{b}'$ was incorrect, so the allocations must converge.
\end{proof}

\begin{lemma}
	\label{lem:kldecomposition} 
	For any two bids $\vec{b}$ and $\vec{b'}$ and corresponding allocations and prices, we have that 
	\[
	\sum_{i \in [n]} b_{i,j} \cdot \log{\frac{b_{i,j}'}{b_{i,j}}} =
	p_j \cdot \log{\frac{p_j'}{p_j}} + p_j \cdot \sum_{i \in [n]} x_{i,j} \cdot \log{\frac{x_{i,j}'}{x_{i,j}}}
	.\]
\end{lemma}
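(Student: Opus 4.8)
The identity is a direct consequence of the multiplicative decomposition $b_{i,j} = p_j \cdot x_{i,j}$, which holds by the very definitions of price ($p_j = \sum_i b_{i,j}$) and allocation ($x_{i,j} = b_{i,j}/p_j$), and likewise $b_{i,j}' = p_j' \cdot x_{i,j}'$. The plan is simply to take logarithms of the ratio $b_{i,j}'/b_{i,j}$, split it into a ``price part'' and an ``allocation part'', weight each term by $b_{i,j}$, and sum over $i$.

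Concretely, first I would fix the good $j$ and restrict attention to the buyers $i$ with $b_{i,j} > 0$ (equivalently $x_{i,j} > 0$); for every other $i$ both sides contribute $0$ under the usual convention $0\cdot\log(\cdot) = 0$ used throughout for KL-type sums, so these indices may be dropped. For $i$ in the support,
\[
\log\frac{b_{i,j}'}{b_{i,j}} \;=\; \log\frac{p_j' \, x_{i,j}'}{p_j \, x_{i,j}} \;=\; \log\frac{p_j'}{p_j} \;+\; \log\frac{x_{i,j}'}{x_{i,j}}.
\]
Multiplying through by $b_{i,j} = p_j \, x_{i,j}$ and summing over these $i$ gives
\[
\sum_{i} b_{i,j}\log\frac{b_{i,j}'}{b_{i,j}}
\;=\; \Bigl(p_j\log\frac{p_j'}{p_j}\Bigr)\sum_{i} x_{i,j}
\;+\; p_j\sum_{i} x_{i,j}\log\frac{x_{i,j}'}{x_{i,j}}.
\]
Since $\sum_i x_{i,j} = \sum_i b_{i,j}/p_j = 1$ (the allocation of good $j$ is a probability vector over buyers), the first factor collapses to $p_j\log(p_j'/p_j)$ and the display is exactly the claimed identity.

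There is essentially no obstacle here beyond bookkeeping of degenerate terms: the only point worth stating is the convention under which indices with $b_{i,j}=0$ vanish on both sides, and the remark that in the application inside Theorem~\ref{thm:nonlazy_allocation} the relevant limit bids stay strictly positive on the support of $b^*$, so no $-\infty$ term ever arises. Everything else is just $\log(ab)=\log a+\log b$ and $\sum_i x_{i,j}=1$.
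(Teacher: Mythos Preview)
Your proof is correct and follows exactly the same route as the paper: substitute $b_{i,j}=p_j x_{i,j}$ and $b_{i,j}'=p_j' x_{i,j}'$, split the logarithm, and use $\sum_i b_{i,j}=p_j$ (equivalently $\sum_i x_{i,j}=1$) to collapse the price term. Your added remarks on the $0\log$ convention and positivity on the support are helpful bookkeeping that the paper leaves implicit.
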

\begin{proof}
We start with the left hand side of the identity and rewrite it as follows:
	\begin{align} \label{eq:diff_limit_bids}
	\sum_{i \in [n]} b_{i,j} \cdot \log{\frac{b_{i,j}'}{b_{i,j}}} &= \sum_{i \in [n]} b_{i,j} \cdot  \log{ \frac{x_{i,j}' \cdot p_j'}{x_{i,j} \cdot p_j} } \notag \\
	& = \sum_{i \in [n]} b_{i,j} \cdot \log{\frac{p_j'}{p_j}} + \sum_{i \in [n]} b_{i,j} \cdot \log{ \frac{x_{i,j}'}{x_{i,j}}} \notag \\
	& = p_j \cdot \log{\frac{p_j'}{p_j}} + \sum_{i \in [n]} p_j \cdot x_{i,j} \cdot \log{\frac{x_{i,j}'}{x_{i,j}}} \notag \\
	& = p_j \cdot \log{\frac{p_j'}{p_j}} + p_j \cdot \sum_{i \in [n]} x_{i,j} \cdot \log{\frac{x_{i,j}'}{x_{i,j}}}
	\end{align}
\end{proof}


We now characterize the limit cycles in the price space. 
Note that we already have from Theorem \ref{thm:nonlazy_allocation} 
that the allocation must remain an invariant along any limit cycle. 
\begin{theorem}\label{thm: limit alloc_cycling_bids}
	The limit bids of the proportional response dynamics are either an equilibrium or there exist equivalence classes $C_1, \ldots, C_k$, where $C_i \cap C_j = \emptyset$, $C_i \subseteq N$ for all $i$, such that there exists $\lambda_i(t) \geq 0$ for each $C_i$ with the property that the price of each good $j \in C_i$ satisfies 
	$p_j(t)/p_j^* = \lambda_i(t), $ for some 
	equilibrium price $\vec{p^*}$. 
\end{theorem}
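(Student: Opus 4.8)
The plan is to analyze the dynamics restricted to the $\omega$-limit set $\Omega$ of the bid trajectory and show that on $\Omega$ the prices evolve by a linear row-stochastic map whose orbits have exactly the claimed cyclic structure. First I would record the two facts that make $\Omega$ tractable: by Theorem~\ref{thm:nonlazy_allocation} the allocation converges to a \emph{fixed} equilibrium allocation $\vec x^{*}$, and by Theorem~\ref{thm:util_me} the utilities converge to $\vec u^{*}$; since $\Omega$ is forward-invariant and allocation and utility are continuous functions of the bids, every point of $\Omega$ has allocation $\vec x^{*}$ and utilities $\vec u^{*}$. Fix an equilibrium price vector $\vec p^{*}$ compatible with $\vec x^{*}$ (one exists because $\vec x^{*}$ is an equilibrium allocation). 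On $\Omega$ we then have $b_{i,j}(t)=x^{*}_{i,j}\,p_j(t)$ for all $i,j$, since $x_{i,j}(t)=b_{i,j}(t)/p_j(t)=x^{*}_{i,j}$ whenever the bid is positive and both sides vanish otherwise.

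The key computation is the price-ratio recursion. Plugging $x_{i,j}(t)=x^{*}_{i,j}$, $u_i(t)=u^{*}_i$ and $B_i(t+1)=p_i(t)$ (the non-lazy update) into the bid update rule of Definition~\ref{def: PRD} gives $b_{i,j}(t+1)=\frac{a_{i,j}x^{*}_{i,j}}{u^{*}_i}\,p_i(t)$; comparing with $b_{i,j}(t+1)=x^{*}_{i,j}\,p_j(t+1)$ and using the exchange-market equilibrium identity~(\ref{eq:condition}), namely $\frac{a_{i,j}}{u^{*}_i}=\frac{p^{*}_j}{p^{*}_i}$, I obtain, for every pair with $x^{*}_{i,j}>0$,
\[ \frac{p_j(t+1)}{p^{*}_j}=\frac{p_i(t)}{p^{*}_i}. \]
Writing $\mu_j(t):=p_j(t)/p^{*}_j$, this says $\mu_j(t+1)=\mu_i(t)$ for \emph{every} buyer $i$ of good $j$; in particular all buyers of a fixed good share one common ratio at each time. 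Averaging this with weights $x^{*}_{i,j}$ recovers the row-stochastic form $\mu(t+1)=M\mu(t)$ with $M_{ji}=x^{*}_{i,j}$ (rows sum to $1$ by market clearing), which I would also invoke.

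Now I would define the equivalence: declare $j\equiv j'$ iff the ratio trajectories $(\mu_j(t))_t$ and $(\mu_{j'}(t))_t$ coincide, let $C_1,\dots,C_k$ be the classes, and set $\lambda_r(t):=\mu_j(t)$ for $j\in C_r$; this is well-defined and positive, which is precisely the claimed conclusion, with the degenerate case $\lambda_r\equiv 1$ (i.e.\ $\vec p\equiv \vec p^{*}$) being the ``equilibrium'' alternative. To recover the cyclic picture of the introduction I would add a coda: since all buyers of a good in $C_r$ share the common ratio $\lambda_r(\cdot+1)$, they lie in one class, so the map $\sigma(C_r):=\{\text{class of the buyers of goods in }C_r\}$ is well defined with $\lambda_{\sigma(r)}(t)=\lambda_r(t+1)$; it is injective because distinct classes have distinct trajectories, hence a permutation of the classes, so the players of each class buy only from players of the next class along its $\sigma$-cycle. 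Finally, to see the ratios genuinely cycle one uses that a row-stochastic $M$ has peripheral spectrum consisting of roots of unity with no nontrivial Jordan blocks; since $M$ restricted to $\Omega$ is onto, every $\mu$ on $\Omega$ lies in the span of the peripheral eigenvectors, on which some power $M^{L}$ is the identity, so $\mu(t+L)=\mu(t)$.

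I expect the main obstacle to be the price-ratio computation together with the bookkeeping that legitimizes working on $\Omega$: one must be sure the allocation on $\Omega$ is the single fixed equilibrium allocation (not merely ``some equilibrium allocation at each step''), that a compatible equilibrium price exists, and that~(\ref{eq:condition}) may be applied edge by edge; the classical spectral fact about stochastic matrices, needed only for the periodicity statement, is the other technical ingredient but is standard.
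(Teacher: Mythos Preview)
Your proof is correct and arrives at the same conclusion as the paper, but via a cleaner route at the key step. The paper, working at a limit point $\vec b(t)$ and its image $\vec b(t+1)$, uses only that utilities agree at the two steps: it expands $u_i(t+1)$ through the update rule, obtains the nonlinear system $\sum_j a_{i,j}x^{*}_{i,j}\bigl(\lambda_i/\sum_k x^{*}_{k,j}\lambda_k-1\bigr)=0$, and then \emph{solves} this system by a maximality argument over the equivalence classes defined as the transitive closure of ``$i\sim k$ iff they buy a common good''. You instead use the full strength of Theorem~\ref{thm:nonlazy_allocation} at \emph{both} steps (allocation equals $x^{*}$ at $t$ and at $t+1$, legitimate by forward invariance of $\Omega$), which collapses the computation to the single edge identity $\mu_j(t+1)=\mu_i(t)$ whenever $x^{*}_{i,j}>0$; the fact that all buyers of a good share one ratio is then immediate, and no separate maximality argument is needed. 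Your equivalence (equal ratio trajectories on $\Omega$) is in general a coarsening of the paper's consumption-graph classes, but either partition satisfies the theorem as stated; the paper's choice is more combinatorially explicit. Your coda---the permutation $\sigma$ on classes and the periodicity via the peripheral spectrum of the stochastic matrix $M_{ji}=x^{*}_{i,j}$---goes beyond what the theorem formally asserts and beyond what the paper proves here; it recovers the informal ``limit cycle'' picture described in the introduction, and your use of $T|_\Omega$ being onto is exactly what is needed to make the injectivity of $\sigma$ go through.
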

\begin{proof}
	We have from Theorem \ref{thm:nonlazy_allocation} that the allocation converges to an equilibrium; 
	let $\vec{p^*}$ be the corresponding equilibrium price, 
	and $\vec{u^*}$ the equilibrium utilities.
	Consider a limit point, and consider taking one step of the dynamics from the limit point. We denote the limit point by $\vec{b}(t)$, 
	just to indicate that the next step after this is $\vec{b}(t+1)$.  
	The point $\vec{b}(t+1)$ is also a limit point of the sequence.\footnote{This is a well known property of continuous time dynamical systems, and the same holds for discrete time systems as well. A quick proof: if $s_n$ is the subsequence whose limit is $\vec{b}(t)$, then by continuity of the update rule, we get that the subsequence  $s_{n}+1$ must have $\vec{b}(t+1)$ as its limit.} 
	By the definition of the update rule, we have that
	$$
	u_i(t+1) = \sum_{j = 1}^n a_{i,j} \cdot x_{i,j}(t+1) = \sum_{j=1}^n a_{i,j} \cdot \frac{b_{i,j}(t+1)}{\sum_{k=1}^n b_{k,j}(t+1)} 
	$$
	By Theorem \ref{thm:util_me}, the utilities converge in the limit to the market equilibrium utilities,  therefore we have that $u_i(t) = u_i^* = u_i(t+1)$ for all $ i \in [n]$. Then we get
	\begin{align} \label{eq:fixedu}
	u_i(t+1) & =\sum_{j=1}^n a_{i,j} \cdot \frac{b_{i,j}(t+1)}{\sum_{k=1}^n b_{k,j}(t+1)}  \notag \\
	& = \sum_{j=1}^n a_{i,j} \cdot \frac{\frac{a_{i,j}}{u_i^*} \cdot x_{i,j}(t) \cdot B_i(t+1)}{\sum_{k=1}^n \frac{a_{k,j}}{u_k^*} \cdot x_{k,j}(t) \cdot B_k(t+1)} \notag \\
	& = \sum_{j=1}^n a_{i,j} \cdot \frac{\frac{p_j^*}{p_i^*} \cdot x_{i,j}(t) \cdot B_i(t+1)}{\sum_{k=1}^n \frac{p_j^*}{p_k^*} \cdot x_{k,j}(t) \cdot B_k(t+1)} \notag \\
	& = \sum_{j=1}^n \; \frac{\frac{a_{i,j}}{p_i^*} \cdot x_{i,j}(t) \cdot B_i(t+1)}{\sum_{k=1}^n x_{k,j}(t) \cdot \frac{B_k(t+1)}{p_k^*}} \notag \\
	& = \sum_{j=1}^n \frac{a_{i,j} \cdot x_{i,j}(t) \cdot \left(\frac{p_i(t)}{p_i^*} \right)}{\sum_{k=1}^n x_{k,j}(t) \cdot \left( \frac{p_k(t)}{p_k^*} \right)} = u_i(t) = \sum_{j=1}^n a_{i,j} \cdot x_{i,j}(t)
	\end{align}
	
	The third equality follows from (\ref{eq:condition}). 
	Let $\lambda_i(t) = p_i(t) / p_i^*$. Then identity (\ref{eq:fixedu}) is equivalent to the following system of equations, where $a_{i,j}, x_{i,j}(t)$ are given and $\lambda_i(t)$ are variables:
	\[
	\left\{
	\begin{array}{ll}
	\lambda_i(t) \geq 0, \mbox{for all} \; i \in [n] \\
	\sum_{j=1}^n 
	a_{i,j} \cdot x_{i,j}(t) \cdot \left( \frac{\lambda_i(t)}{\sum_{k=1}^n x_{k,j}(t) \cdot \lambda_k(t)} - 1 \right) = 0, \; \; \mbox{for all} \; i \in [n] 
	\end{array}
	\right.
	\]
	One solution can be obtained as follows. Define the equivalence relation $\sim$ as follows: $i \sim k$ if and only if there exists $ j \in [n]$ such that $x_{i,j}, x_{k,j} > 0$. Then consider the transitive closure of this graph -- that is, if player $i$ purchases some other good $\ell \neq j$, then all the players that purchase strictly positive amounts of good $\ell$ are in the same equivalence class with $i$ and $k$. 
	Let $C_1, \ldots, C_k$ be equivalence classes with respect to the $\sim$ relation. Then setting $\lambda_i(t) = \lambda_k(t)$ for each $i \sim k$ works. This means that all the goods in the same equivalence class have prices within the same factor away from the market equilibrium price at any point in time. 
	
	We show that in fact these are the only solutions. Consider an arbitrary solution to this system and suppose towards a contradiction that there exist two players $i,i'$ in the same equivalence class $C_{\ell}$ but with $\lambda_i(t) > \lambda_{i'}(t)$. W.l.o.g., $\lambda_i(t) = \max_{v \in C_{\ell}} \lambda_v(t)$. Then for all $j$ with $x_{i,j}(t) > 0$ we have 
	\begin{align}  \label{eq:contradiction_lambda}
	\frac{\lambda_{i}(t)}{\sum_{k=1}^n x_{k,j}(t) \cdot \lambda_k(t)} > 1 \iff a_{i,j} \cdot x_{i,j}(t) \cdot \left(\frac{\lambda_{i}(t)}{\sum_{k=1}^n x_{k,j}(t) \cdot \lambda_k(t)} - 1\right) > 0
	\end{align}
	Summing up inequality (\ref{eq:contradiction_lambda}) over all $j$ we get  
	$$
	\sum_{j =1}^n a_{i,j} \cdot x_{i,j}(t) \cdot \left(\frac{\lambda_{i}(t)}{\sum_{k=1}^n x_{k,j}(t) \cdot \lambda_k(t)} - 1\right) > 0,
	$$
	which does not satisfy the required system of identities. Thus the assumption must have been false and $\lambda_i(t) = \lambda_{i'}(t)$ for any players $i,i'$ in the same equivalence class.
\end{proof}

\section{Comparison between the Wu-Zhang dynamics and Proportional Response Dynamics} \label{apx:comparison}

In this section we compare the proportional response dynamics we studied with the Wu-Zhang dynamics~\cite{wu2007proportional}. Given that the dynamics in Wu-Zhang does not use money and the players play by matching the offers from the other players in previous rounds, we refer to the Wu-Zhang dynamics as tit-for-tat.

The two dynamics may have different trajectories given the same starting configuration, even in the special case where there exists a vector $\vec{w} = (w_1, \ldots, w_n)$ so that $a_{i,j} = w_j$ for each player $i$ and each good $j$, which is the special case where Wu and Zhang~\cite{wu2007proportional} established convergence to market equilibria for the tit-for-tat dynamic.

For valuation matrix $A = (a_{i,j})$, the tit-for-tat dynamic (defined in Wu-Zhang~\cite{wu2007proportional}) is 
$$ \textstyle
y_{i,j}(t+1) = \frac{y_{j,i}(t) \cdot a_{i,j}}{u_i(t)}
$$
where $y_{j,i}(t)$ is the fraction received by player $i$ from good $j$ in round $t$ and the utility of player $i$ in round $t$ is $u_i(t) = \sum_{k=1}^n y_{k,i}(t) \cdot a_{i,k}$.
(The order of the subscripts here is good, player, which is the convention used in \cite{wu2007proportional}, as opposed to our notation where the order is player, good.)

\begin{figure}[H]
	\centering
	\subfigure[Allocations under tit-for-tat.]
	{
		\includegraphics[scale=0.33]{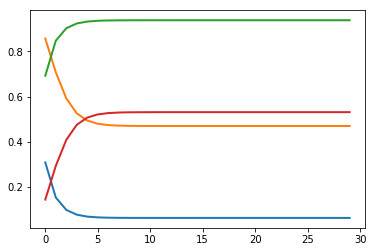}
		\label{fig:tit-for-tat-comparison}
	}
	\subfigure[Allocations under the non-lazy proportional dynamic.]
	{
		\includegraphics[scale = 0.33]{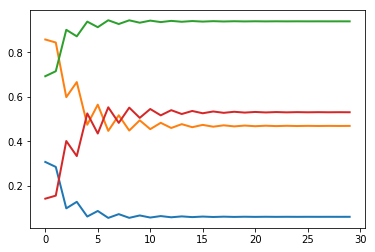}
		\label{fig:PR-comparison}
	}
	\subfigure[Allocations under the lazy proportional response dynamic with all $\alpha_i = 1/2$.]
	{
		\includegraphics[scale = 0.33]{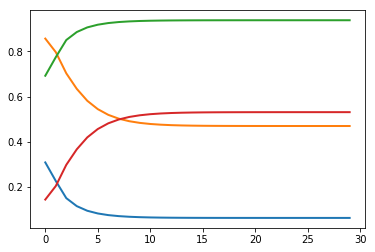}
		\label{fig:PR-comparison-lazy}
	}
	\subfigure[Utilities under tit-for-tat.]
	{
		\includegraphics[scale=0.33]{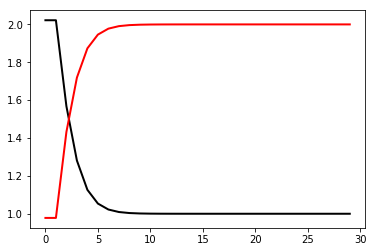}
		\label{fig:tit-for-tat-comparison_utilities}
	}
	\subfigure[Utilities under the non-lazy proportional dynamic.]
	{
		\includegraphics[scale = 0.33]{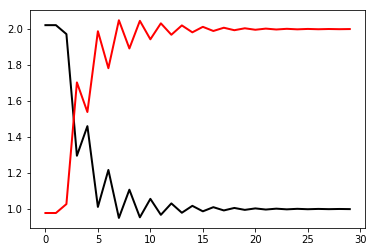}
		\label{fig:PR-comparison_utilities}
	}
	\subfigure[Utilities under the lazy proportional response dynamics with all $\alpha_i = 1/2$.]
	{
		\includegraphics[scale = 0.33]{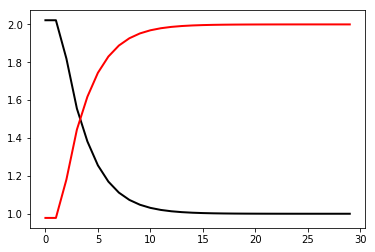}
		\label{fig:PR-comparison-lazy_utilities}
	}
	\caption{Comparison between tit-for-tat, the non-lazy proportional dynamic, and the lazy proportional dynamic, initialized with the same initial fractions of splitting the goods. The valuation matrix is $\vec{a} = [[1, 2], [1, 2]]$ and the initial bids for the proportional response dynamics executions are $\vec{b} =  [[0.4, 0.6], [0.9, 0.1]]$. The initial fractions for tit-for-tat are given by $y_{j,i}(0) = b_{i,j}(0)/p_j(0)$, where $y_{j,i}(0)$ is the fraction of good $j$ that player $i$ receives in round $0$ for the bids $\vec{b}$.}
	\label{fig:comparison}
\end{figure}

In Figure~\ref{fig:comparison}, the $X$ axis shows the round number, while the $Y$ axis shows each utility $u_i(t)$ over time and each allocation over time, where an allocation means the fraction received by each player $i$ from good $j$, for each $i,j$, for every time unit $t = 0,1,2, \infty$.

\bigskip
In Figure~\ref{fig:tit-for-tat-cyling}, we show a three player economy on which the tit-for-tat dynamic cycles. To obtain the cycling, we consider valuations 
\[\textbf{a} = 
\begin{vmatrix}
0 & 6 & 4 \\ 
3 & 0 & 9 \\        
9 & 6 & 0\\
\end{vmatrix}
\]
and initial fractions 
\[\textbf{y}(0) = 
\begin{vmatrix}
0.0 & 0.2805339037254016 & 0.7194660962745985 \\
0.273923422472049 & 0.0 & 0.726076577527951 \\
0.491752727261851 & 0.5082472727381491 & 0.0\\
\end{vmatrix}
\]

\begin{figure}[H]
	\centering
	\subfigure[Allocations under tit-for-tat for $T=30$ rounds.]
	{
		\includegraphics[scale=0.45]{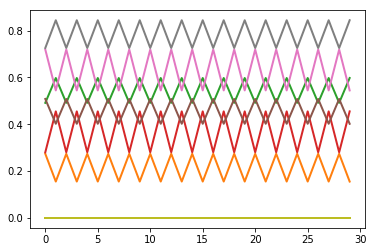}
		\label{fig:tit-for-tat-allocations-cycling-100}
	}
	\subfigure[Utilities under tit-for-tat for $T=30$ rounds; same initial conditions as in Figure (a)]
	{
		\includegraphics[scale=0.45]{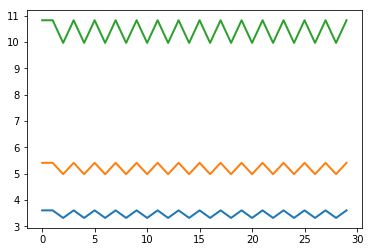}
		\label{fig:tit-for-tat-utilities-cycling-100}
	}
	\caption{Tit-for-tat dynamic cycling for both utilities and allocations with period two. The instance is a three player economy where the matrix does not satisfy the symmetry property under which tit-for-tat is known to converge to market equilibria~\cite{wu2007proportional} (Recall this property requires that each good has a common value, i.e. that there exist values $w_j$ so that $a_{i,j} = w_j$ for each player $i$ and good $j$.).}
	\label{fig:tit-for-tat-cyling}
\end{figure}

The fractions oscillate between the values at time $t=0$, equal to $\textbf{y}(0)$, and those from time $t=1$, which are equal to:

\[\textbf{y}(1) = 
\begin{vmatrix}
0.0 & 0.4552048517736218 & 0.5447951482263783 \\
0.1553967077250424 & 0.0 & 0.8446032922749576 \\
0.5978029457196989 & 0.402197054280301 & 0.0\\
\end{vmatrix}
\]

We note that market equilibria exist on such graphs and proportional response dynamics converges on these instances for any strictly positive initial bids.

\section{Discussion}

It would be interesting to see whether a generalization of the dynamic converges for the exchange economy when each player brings multiple goods, as well as to understand more broadly what families of dynamics lead to efficient exchange. It would also be interesting to show a $1/T$ rate of convergence of any last iterate.

 \section{Acknowledgements}
 
 We thank the reviewers for useful feedback that helped improve the paper.

\bibliography{exchangebib}


\end{document}